\begin{document}

\setlength{\extrarowheight}{4pt} 

\title{Nested Removal of Strictly Dominated Strategies in Infinite Games\footnote{Many thanks to Julien Manili for stimulating correspondence and insightful comments. I am also grateful to the Associate Editor and anonymous reviewers for helpful comments. Financial support from the OP Pohjola Research Foundation (Grant Nos. 20230084 and 20240089) is gratefully acknowledged. All errors are my own.}}
\author{Michele Crescenzi\\ \href{mailto:michele.crescenzi@helsinki.fi}{michele.crescenzi@helsinki.fi}}
\affil{University of Helsinki and Helsinki Graduate School of Economics}

\date{March 2026\\
\vspace*{10pt}
\textcolor{red}{This version is published in \textit{Mathematical Social Sciences} (2026), DOI: \url{https://doi.org/10.1016/j.mathsocsci.2026.102552}}}

\newtheorem{proposition}{Proposition}
\newtheorem{theorem}{Theorem}
\newtheorem*{theorem*}{Theorem}
\newtheorem*{definition*}{Definition}
\newtheorem{assumption}{Assumption}
\newtheorem{definition}{Definition}
\newtheorem{lemma}{Lemma}
\newtheorem{claim}{Claim}
\newtheorem{remark}{Remark}
\newtheorem*{remark*}{Remark}
\newtheorem{corollary}{Corollary}
\theoremstyle{definition}
\newtheorem{example}{Example}
\newtheorem*{example*}{Example}

\maketitle

\begin{abstract}
We compare two procedures for the iterated removal of strictly dominated strategies. In the nested procedure, a strategy of a player is removed only if it is dominated by an unremoved strategy. The universal procedure is more comprehensive for it allows the removal of strategies that are dominated by previously removed ones. Outside the class of finite games, the two procedures may lead to different outcomes in that the universal one is always order independent while the other is not. We provide necessary and sufficient conditions for the equivalence of the two procedures. The conditions we give are based on completely bounded subsets of strategy profiles, which are variations of the bounded mechanisms from the literature on full implementation. The two elimination procedures are also shown to be equivalent in quasisupermodular games as well as in games with compact strategy spaces and upper semicontinuous payoff functions. Finally, we give an example of a game in which the nested procedure is order independent but not equivalent to the universal one, thus showing that order independence is not sufficient to guarantee the equivalence of the two elimination procedures.

\bigskip

JEL CLASSIFICATION: C70, C72
\bigskip

KEYWORDS: Game theory, strict dominance, iterated elimination, order independence
\end{abstract}

\newpage

\onehalfspacing

\section{Introduction}
There are two procedures for the iterated removal of strictly dominated strategies in normal-form games. In the \textit{nested} procedure, a strategy of a player is removed only if it is dominated by an unremoved strategy. In any round of elimination, one can ignore the strategies that were previously removed. Thus, the removal of a strategy can be interpreted as its ``physical'' removal from the game. The \textit{universal} procedure is more comprehensive as it allows the elimination of any strategy that is dominated by an unremoved or even a previously removed strategy. Hence, there might be elimination rounds in which it is necessary to recall a removed strategy to eliminate an unremoved one. As a result, the removal of a strategy can no longer be interpreted as its ``physical'' removal from the game, thus adding to the computational burden of the universal procedure.

Which elimination procedure should be followed? In finite games the two procedures lead to the same outcomes, thus making it obvious to choose the nested one thanks to its simplicity. But the choice is not so obvious in infinite games, i.e., games with infinite strategy spaces, in which the two procedures may converge to different limits. As \cite{chen2007iterated-strict-dominance} show, the universal elimination procedure is order independent since in every game it converges to a unique product set of serially undominated strategies. What is more, the latter set is exactly the set of all strategy profiles compatible with common knowledge of not playing strictly dominated strategies. By contrast, \citet[Examples 1 and 5]{dufwenberg2002existence-maximal-reduction} produce infinite games in which the nested procedure\footnote{The elimination procedure in \cite{dufwenberg2002existence-maximal-reduction} is allowed to attain a limit in countably many steps, whereas in this paper we study elimination sequences of any transfinite length. This difference, though, does not affect order dependence in the two examples we are referring to.} fails on both counts, thus leaving some serially dominated strategies unremoved. Therefore, one should follow the universal procedure in any game where the nested one is flawed. In all other games the two procedures are equivalent and, just like in finite games, the nested one is preferable in that it is simpler and computationally less burdensome. But what are the games where the two elimination procedures are equivalent? Put differently, what are the games in which the nested procedure does not leave any serially dominated strategy unremoved? As the extant literature does not provide a full characterization of these games, we will provide one here.

Why look for a full characterization of the games in which the two elimination procedures are equivalent? For one thing, the characterization will help us gain a better understanding of the iterated removal of strictly dominated strategies, which is a fundamental technique for solving non-cooperative games. For another, we will be able to evaluate the necessity of different choices made in the literature and save on computational resources. For example, \cite{dufwenberg2002existence-maximal-reduction} and several game theory textbooks follow the nested elimination procedure. By contrast, in \cite{milgrom-roberts1990rationalizability} and the ensuing literature on supermodular games, the universal procedure is usually adopted. With the full characterization under our belt, we will be able to determine whether one can substitute the nested procedure for the universal one in the games just mentioned, thus economizing on computational resources.

We study the equivalence between the nested and universal removal of strictly dominated strategies for a vast class of normal-form games, in which the cardinalities of player sets and strategy spaces are unrestricted. We do not make any topological or measure-theoretic assumptions. Given such a high level of generality, we focus on strict domination by pure strategies. A procedure for the iterated removal of strictly dominated strategies is represented by non-increasing sequences of reductions. A reduction of a game is any product subset of strategy profiles of the game. Any round of elimination corresponds to a reduction, and in any round players can remove as many strictly dominated strategies as they wish. Elimination sequences can have transfinite length, and the limit of an elimination sequence is called its maximal reduction. The nested elimination procedure in this paper extends the elimination procedure of \cite{dufwenberg2002existence-maximal-reduction} to any transfinite length; the universal procedure we study is the IESDS* procedure of \cite{chen2007iterated-strict-dominance}, but with the minor  difference that we require game reductions to be in Cartesian product form.

To see how nested and universal elimination procedures may lead to different outcomes, consider a one-player game in which the strategy set is the open interval $(0,1)$ and the payoff function is the identity map.\footnote{This elementary game is also discussed in \citet[p. 2011]{dufwenberg2002existence-maximal-reduction} and \citet[p. 302]{chen2007iterated-strict-dominance}.} Every strategy is strictly dominated. The unique maximal reduction of the universal elimination procedure is the empty set. But since one does not have to remove all dominated strategies in one go, the nested procedure gives rise to infinitely many maximal reductions: the empty set and any singleton $\{a\}$, with $a \in (0,1)$, are all maximal nested reductions. For instance, $\{\sfrac{1}{2}\}$ is the maximal reduction of the nested elimination sequence in which $(\sfrac{1}{2}, 1)$ is removed in the first round and $(0, \sfrac{1}{2})$ in the second.

We show that the existence of completely bounded reductions is \textit{exactly} the condition that guarantees the equivalence between nested and universal elimination procedures. A reduction is completely bounded if, whenever a player has a strategy $a$ in the reduction that is strictly dominated by another strategy $b$, then there is an undominated strategy $c^*$ (possibly equal to $b$) that dominates $a$. Completely bounded reductions are variations of the bounded mechanisms introduced by \cite{jackson1992bounded} in the full implementation literature. Our paper is not the first to leverage the idea of boundedness. At least since \cite{dufwenberg2002existence-maximal-reduction}, several conditions have been found under which the nested elimination procedure does not lead to unsatisfactory outcomes. Such conditions are typically sufficient but not necessary.\footnote{The closest conditions to this paper are discussed in Section \ref{subsec:RelConditions}.} By contrast, we formalize boundedness so as to obtain conditions that are both necessary and sufficient for the equivalence between nested and universal procedures. Specifically, a crucial aspect in our definition of a completely bounded reduction is that the dominating strategies $b$ and $c^*$ are not required to belong to the reduction, i.e., they just need to be in the player's strategy space. As we show in Section \ref{sec:GKZ}, if one defines a bounded reduction requiring that $b$ and $c^*$ belong to the reduction itself, then nested and universal procedures are not necessarily equivalent. 

Our first result (Theorem \ref{thm:TFAE_outcomes}) is that the two elimination procedures lead to the same maximal reduction if and only if every maximal reduction of the nested procedure is completely bounded. In the one-player game discussed above, none of the non-empty maximal nested reductions $\{a\}$ is completely bounded, whence the discrepancy with the universal procedure. 

There are games in which the two procedures lead to the same maximal reduction, and yet the classes of nested and universal elimination sequences do not coincide. To wit, there exist universal elimination sequences that cannot be obtained in the nested procedure. A case in point is the game in Example \ref{ex:UnboundedAtLimit}. This calls for a stronger form of equivalence, in which every elimination sequence obtained with one procedure can also be obtained with the other. The stronger equivalence subsumes the one discussed in the previous paragraph. In Theorem \ref{thm:TFAE_classes} we establish that the classes of nested and universal elimination sequences coincide if and only if the range of every nested elimination sequence---or, equivalently, of every universal sequence---consists only of completely bounded reductions. An immediate corollary is that the strong equivalence holds in games where all reductions, including those not in the range of any elimination sequence, are completely bounded. We show in Proposition \ref{prop:AllBounded} that all reductions are completely bounded in: 1) games with finite strategy spaces; 2) games with compact strategy spaces and upper semicontinuous payoff functions; and 3) quasisupermodular games, which generalize the supermodular games of \cite{milgrom-roberts1990rationalizability}. In these three classes of games, we can use the nested elimination procedure without worrying.

Our findings help us understand better the role played by order independence, which is a desirable property of any iterated elimination procedure. We have already mentioned that the universal procedure is order independent while the nested one is not. As a consequence, the two procedures can be equivalent only in games where the nested procedure is order independent. But while it is necessary, order independence of the nested procedure is not sufficient for the equivalence. In Example \ref{ex:OInd-no-equal} we produce a game in which the nested procedure is order independent, and yet its unique maximal reduction differs from the one of the universal procedure, meaning that the maximal nested reduction contains serially dominated strategies. To the best of our knowledge, this is the first example in the literature of a game in which the nested and the universal elimination procedures attain different limits in spite of both being order independent. Conceptually, Example \ref{ex:OInd-no-equal} raises a note of caution: order independence of an elimination procedure does not guarantee satisfactory predictions.\footnote{\cite{manili2026} provides sufficient conditions that, for a wide range of solution concepts, guarantee both the order independence of an iterated elimination procedure and its epistemic validity, i.e., its consistency with epistemic foundations.}

We also compare the nested procedure with the elimination procedure introduced by \cite{gilboa1990order}. The latter is a nested procedure satisfying the additional requirement that, if a strategy $a$ is removed in a round of elimination, then there must be a strategy $b$ that strictly dominates $a$ and is not removed before the next round. We establish in Theorem \ref{thm:TFAE_GKZ} that the two elimination procedures lead to the same maximal reductions in every game, provided that elimination sequences of transfinite length are allowed. Furthermore, the class of elimination sequences \textit{\`{a} la} \cite{gilboa1990order} is contained in the class of nested sequences. The two classes coincide if and only if, in either type of procedure, the range of every elimination sequence consists only of reductions satisfying \textit{local} boundedness, which is a weaker version of the complete boundedness discussed above.

\paragraph*{Related work.} This paper contributes to the literature on the iterated removal of strictly dominated strategies in infinite games. \cite{dufwenberg2002existence-maximal-reduction} study the nested elimination procedure and give sufficient conditions for its order independence and the non-emptiness of its maximal reduction. Contrary to our work, they do not allow elimination sequences of transfinite length. The universal procedure is studied by \cite{chen2007iterated-strict-dominance}, who show its order independence in every game and give epistemic foundations for it. Our work combines the two contributions just mentioned by characterizing the equivalence between the nested and the universal elimination procedure. We also characterize the equivalence between the nested procedure and the elimination procedure of \cite{gilboa1990order}. An analogous characterization is provided by \cite{hsieh-et-al_iterated-bounded-dominance2023} for three classes of elimination sequences: nested, boundedly dominated, and elimination sequences \textit{\`{a} la} \cite{gilboa1990order}. We discuss their work in more detail in Section \ref{subsec:RelConditions}.

At least since \cite{gilboa1990order}, order independence has been a central theme in the literature on iterated elimination procedures. Our main contribution to this topic is to show that order independence is necessary but not sufficient for the equivalence between nested and universal elimination procedures. By contrast, the extant literature has focused on finding sufficient conditions for the order independent removal of strictly dominated strategies. Such conditions have been provided by \cite{dufwenberg2002existence-maximal-reduction}, \cite{gilboa1990order}, \cite{apt2007many-faces, apt2011direct}, \cite{luo2020iterated}, and \cite{patriche2013maximalreductiongames}. In Section \ref{sec:Discuss} we discuss in more detail how our work relates to the most relevant conditions found in the papers just mentioned. We do not delve into the epistemic foundations of order independence, which can be found in \cite{trost2014epistemic-rationale-or-ind}. Sufficient conditions for the order independence and the epistemic validity of several iterated elimination procedures are provided by \cite{manili2026}.

Iterated elimination procedures have been studied in a wide range of contexts. Our work focuses on the iterated removal of strictly dominated strategies in normal form games of complete information. Recently, iterated strict dominance has been studied in games of incomplete information by \cite{bach-perea2021incomplete}, whereas \cite{luo2020iterated} have examined iterated elimination procedures in abstract choice settings. \cite{maniliOrderIndependence} has explored the order independence of elimination procedures for rationalizability, whereas \cite{manili2026} has provided sufficient conditions that guarantee the equivalence of nested and universal elimination procedures for several solution concepts, including iterated strict dominance by mixed strategies and rationalizability.

\paragraph*{Paper structure.} We lay out the model in Section \ref{sec:Model}. In Section \ref{sec:Res} we study the equivalence between nested and universal elimination procedures and then provide examples to illustrate the main points of our analysis. In Section \ref{sec:GKZ} we compare our formulation of nested elimination procedures with the one of \cite{gilboa1990order}. In Section \ref{sec:Discuss} we discuss some epistemic aspects of the two main procedures studied in this paper as well as some conditions found in the literature that are close to our work.

\section{Model}\label{sec:Model}
\subsection{Game reductions and strict dominance}
Let $\Gamma = \left(I, (A_i)_{i\in I}, (u_i)_{i\in I}\right)$ be a game in normal form. The set of players is $I$. We assume there are at least two players\footnote{Assuming that there are at least two players is without loss of generality. Any one-player game can be turned into a two-player game by adding a dummy player who has a singleton set of strategies.} but the cardinality of $I$ is otherwise unrestricted. The set of pure strategies of player $i\in I$ is $A_i$, and $i$'s payoff function is the map $u_i$ from $A = \prod_{j\in I} A_j$ to $\mathbb{R}$. Strategy sets must be non-empty but their cardinalities are otherwise unrestricted. As usual, we write $u_i(a_i, a_{-i})$ to denote $u_i(a)$, where $a = (a_j)_{j\in I}$ is an element of $A$, $a_i$ is the $i$th term of $a$, and $a_{-i} = (a_j)_{j\in I\setminus\{i\}}$.

A \textbf{reduction} of $\Gamma$ is a subset $R \subseteq A$ such that $R = \prod_{i\in I} R_i$ and $R_i \subseteq A_i$ for all $i\in I$. For any $i\in I$ and any reduction $R$, we abbreviate $\prod_{j\in I\setminus\{i\}} R_j$ by $R_{-i}$. We often write $R$ as $R_i \times R_{-i}$. Any non-empty $R$ induces a reduced game of $\Gamma$, in which the set of players is still $I$, the strategy set of player $i\in I$ is $R_i$, and $i$'s payoff function is the restriction of $u_i$ to $R$.

Given a non-empty $R_{-i}$, a strategy $a_i \in A_i$ is \textbf{strictly dominated} by $b_i \in A_i$ relative to $R_{-i}$ if $u_i(a_i, a_{-i}) < u_i(b_i, a_{-i})$ for all $a_{-i} \in R_{-i}$; in this case we also write $a_i \prec_{R_{-i}} b_i$. Note that $\prec_{R_{-i}}$ defines a strict order---i.e., an asymmetric and transitive binary relation---over $A_i$. Our study is confined to strict dominance by a pure strategy. We often omit the qualifier \textit{strict} as no confusion should arise. The \textbf{dominating set} (or \textbf{strict upper contour set}) of $a_i$ relative to $R_{-i}$ is the set $D_{R_{-i}}(a_i):= \left\lbrace b_i\in A_i : a_i \prec_{R_{-i}} b_i \right\rbrace$. Transitivity of $\prec_{R_{-i}}$ implies $D_{R_{-i}}(b_i) \subseteq D_{R_{-i}}(a_i)$ for all $b_i \in D_{R_{-i}}(a_i)$. An \textbf{undominated} (or \textbf{maximal}) \textbf{element} of $D_{R_{-i}}(a_i)$ is a strategy $b_i^* \in D_{R_{-i}}(a_i)$ for which there is no $b_i \in D_{R_{-i}}(a_i)$ such that $b_i^* \prec_{R_{-i}} b_i$.

The iterated elimination procedures we are going to study are based on two binary relations defined on the set of all reductions of $\Gamma$. Formally, given two reductions $R$ and $S$, we write $R \xlongrightarrow{n} S$ and say that $S$ is a \textbf{nested reduction} of $R$, or $R$ reduces to the nested reduction $S$, if the following holds: $S \subseteq R$ and, for all $i\in I$,
\begin{equation}\label{eq:Nested-Red}
\text{if } a_i \in R_i\setminus S_i, \text{ then there is } b_i \in R_i \text{ such that } a_i \prec_{R_{-i}} b_i.
\end{equation}
Analogously, we write $R \xlongrightarrow{u} S$ and say that $S$ is a \textbf{universal reduction} of $R$, or $R$ reduces to the universal reduction $S$, if the following holds: $S \subseteq R$ and, for all $i\in I$,
\begin{equation}\label{eq:Universal-Red}
\text{if } a_i \in R_i\setminus S_i, \text{ then there is } b_i \in A_i \text{ such that } a_i \prec_{R_{-i}} b_i.
\end{equation}

Nested and universal reductions differ in how they constrain the dominating strategy $b_i$. In \eqref{eq:Nested-Red}, the dominating strategy $b_i$ must lie in $R_i$, which is a subset of the strategy set $A_i$. In \eqref{eq:Universal-Red}, $b_i$ is allowed to lie outside $R_i$. Clearly, a nested reduction is also universal but not vice versa.
Nested reductions are used by \cite{dufwenberg2002existence-maximal-reduction} and several textbooks, such as \citet[pp. 60-1]{ORgametheory94}, \citet[p. 45]{FTgametheory91} and \citet[pp. 58-9]{myerson91GTbook}. Universal reductions are used by \cite{chen2007iterated-strict-dominance} and \cite{kunimoto-serrano2011}.\footnote{ \cite{milgrom-roberts1990rationalizability}, \cite{moulin1979dominance} and \citet[pp. 193-6]{ritzberger2002GTbook} adopt a slightly less general version of universal reductions, in which each $R_i\setminus S_i$ must contain \textit{all} the strategies of player $i$ that are dominated relative to $R_{-i}$. By contrast, $R_i\setminus S_i$ in \eqref{eq:Nested-Red} and \eqref{eq:Universal-Red} can be any subset of those dominated strategies.}

In Section \ref{sec:GKZ} we examine an alternative formulation of nested reductions introduced by \cite{gilboa1990order}, in which the dominating strategy $b_i$ is required to be in $S_i$.

\subsection{Iterated elimination sequences and maximal reductions}\label{subsec:ElimSequen}
A procedure for the iterated removal of dominated strategies is represented by a class of elimination sequences. An elimination sequence is an ordinal-indexed sequence of reductions. We consider primarily two classes of sequences: nested and universal. Formally, an \textbf{elimination sequence} is a sequence $\langle R^{\alpha}: \alpha < \delta\rangle$, where $\delta$ is a non-zero ordinal,\footnote{The index ordinal $\delta$ can be either a successor or a limit ordinal. We can write $\langle R^{\alpha}: \alpha \leq \delta\rangle$ when the sequence is indexed by a successor ordinal $\delta + 1$.} such that:
	\begin{itemize}
	\item $R^0 = A$;
	\item $R^{\alpha} \xlongrightarrow{\ell} R^{\alpha +1}$ for all $\alpha$ such that $\alpha + 1 < \delta$;
	\item $R^{\lambda} = \cap_{\alpha < \lambda} R^{\alpha}$ for all limit ordinals $\lambda < \delta$;
	\item there exists a reduction $\hat{R}\subseteq A$ such that $R^{\alpha} = \hat{R}$ for some $\alpha < \delta$ and, for all reductions $S\subseteq A$, we have $\hat{R} \xlongrightarrow{\ell} S$ only if $S = \hat{R}$.
\end{itemize}

We do not put any restriction on the index ordinal $\delta$ other than it being different from zero.\footnote{Elimination sequences of transfinite length are employed, among others, by \cite{apt2007many-faces}, \cite{chen2007iterated-strict-dominance}, \cite{hsieh-et-al_iterated-bounded-dominance2023}, \cite{lipman1994note-transfinite}, and \cite{luo2020iterated}.} An elimination sequence is nested when $\ell = n$ in the definition above, whereas it is universal when $\ell = u$. Any elimination sequence is non-increasing, i.e., $R^{\beta} \subseteq R^{\alpha}$ whenever $\alpha < \beta$. The reduction $\hat{R}$ is the \textbf{maximal reduction} of the sequence. The classes $\mathcal{E}^n$ and $\mathcal{E}^u$ contain all feasible nested and universal elimination sequences, respectively. Both $\mathcal{E}^n$ and $\mathcal{E}^u$ are non-empty.\footnote{Non-emptiness of $\mathcal{E}^u$ follows from Theorem 1 in \citet[p. 304]{chen2007iterated-strict-dominance}, which holds true even when, as we assume here, every reduction must be in product form. In addition, the proof of their theorem can be easily adapted to show the non-emptiness of $\mathcal{E}^n$. Relatedly, the possible nonexistence of maximal reductions in \cite{dufwenberg2002existence-maximal-reduction} stems from the fact that only elimination sequences indexed by the first infinite ordinal $\omega$ are allowed in their study, whereas we allow sequences of any transfinite length.} The two classes $\mathcal{E}^n$ and $\mathcal{E}^u$ need not overlap; for instance, they are disjoint in Example \ref{ex:OInd-no-equal}. The set of all maximal reductions of all sequences in $\mathcal{E}^{\ell}$ is denoted by $\mathbf{\hat{R}}^{\ell}$. The sets $\mathbf{\hat{R}}^n$ and $\mathbf{\hat{R}}^u$ need not overlap either; for instance, they are disjoint in Example \ref{ex:OInd-no-equal}.

The \textbf{range} of an elimination sequence $\langle R^{\alpha}: \alpha < \delta\rangle$ is the set $\left\lbrace R^{\alpha}: \alpha < \delta \right\rbrace$. The range of a class $\mathcal{E}^{\ell}$ is the set of all reductions that belong to the range of some sequence in $\mathcal{E}^{\ell}$. An \textbf{initial segment} of $\langle R^{\alpha}: \alpha < \delta \rangle$ is a subsequence $\langle S^{\alpha}: \alpha < \delta' \rangle$, with $0< \delta' \leq \delta$, such that $R^{\alpha} = S^{\alpha}$ for all $\alpha < \delta'$. Any nested elimination sequence is the initial segment of some universal sequence.

A class of elimination sequences is \textbf{order independent} if all the sequences in the class have the same maximal reduction. \citet[Theorem 1]{chen2007iterated-strict-dominance} show that the class $\mathcal{E}^{u}$ is order independent in every game\footnote{More precisely, \citet[Theorem 1]{chen2007iterated-strict-dominance} show that the class of IESDS* elimination sequences, in which reductions are not necessarily in product form, is order independent. Since it is easy to prove that the class of universal sequences $\mathcal{E}^{u}$ is contained in the class of IESDS* sequences, the order independence of $\mathcal{E}^{u}$ is an immediate corollary of \citet[Theorem 1]{chen2007iterated-strict-dominance}.} and, as a consequence, $\mathbf{\hat{R}}^{u}$ is always a singleton. \cite{dufwenberg2002existence-maximal-reduction} produce infinite games in which the class of nested elimination sequences is instead order dependent. While $\mathbf{\hat{R}}^n$ and $\mathbf{\hat{R}}^u$ may well be disjoint, it is always the case that $\hat{R}^u \subseteq \hat{R}^n$ for all $\hat{R}^n \in \mathbf{\hat{R}}^n$, where $\hat{R}^u$ is the unique reduction in $\mathbf{\hat{R}}^u$. Epistemic foundations for the two elimination procedures are discussed in Section \ref{subsec:Epistem}.

\section{Results}\label{sec:Res}
Here we address the central question of the paper: When are the nested and the universal elimination procedures equivalent? There are two notions of equivalence. We can deem the two types of elimination procedures as equivalent if they both lead to the same set of maximal reductions, i.e., if $\mathbf{\hat{R}}^n = \mathbf{\hat{R}}^u$. A more restrictive notion of equivalence requires that the two classes of elimination sequences $\mathcal{E}^{n}$ and $\mathcal{E}^{u}$ be equal. Clearly, the second form of equivalence implies the first. We are going to show that both forms depend on the following condition, which adapts the notion of bounded mechanisms \citep{jackson1992bounded} for strict dominance.

\begin{definition}\label{Def:CBound}
A reduction $R$ is \textbf{completely bounded} if for all $i\in I$ and all $a_i \in R_i$ the following holds: if $D_{R_{-i}}(a_i)$ is non-empty, then it contains an undominated element. That is, if $a_i \prec_{R_{-i}} b_i$ for some $b_i \in A_i$, then there is a strategy $c_i^* \in A_i$ such that $a_i \prec_{R_{-i}} c_i^*$ and, for all $c_i \in A_i$, it is not the case that $c^*_i \prec_{R_{-i}} c_i$.

A class of elimination sequences $\mathcal{E}^{\ell}$ is completely bounded if every reduction in the range of $\mathcal{E}^{\ell}$ is completely bounded.
\end{definition}

The introduction of completely bounded reductions is motivated by the following observation. Consider again the one-player game discussed in the Introduction, in which the strategy set is the open interval $(0,1)$ and the payoff function is the identity map. Clearly, every strategy is dominated and the unique maximal universal reduction is empty. Nevertheless, the game has infinitely many maximal nested reductions. In addition to the empty set, any singleton $\{a\}$, with $a \in (0,1)$, is a maximal nested reduction too. A common feature of all these non-empty maximal reductions is that they are not completely bounded. This suggests that the presence of unbounded maximal reductions is the culprit for the discrepancy between universal and nested elimination procedures.

The maximal universal reduction is always (vacuously) completely bounded. If a maximal nested reduction $\hat{R}^n$ is completely bounded, then $\hat{R}^n = \hat{R}^u$. Thus $\mathbf{\hat{R}}^n$ can contain at most one completely bounded reduction, which coincides with the maximal universal reduction. The following lemma says that a maximal nested reduction can contain a dominated strategy only if it is not completely unbounded.

\begin{lemma}\label{lemma:MaxNestUnbound}
Let $\hat{R}$ be a maximal nested reduction. If for some player $i\in I$ there is a strategy $a_i \in \hat{R}_i$ such that $D_{\hat{R}_{-i}}(a_i)$ is not empty, then $D_{\hat{R}_{-i}}(a_i)$ does not contain any undominated element.
\end{lemma}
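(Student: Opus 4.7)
The plan is to proceed by contradiction. Suppose $\hat{R}$ is a maximal nested reduction, that $a_i\in\hat{R}_i$ has a non-empty dominating set $D_{\hat{R}_{-i}}(a_i)$, and that $b_i^{*}$ is an undominated element of $D_{\hat{R}_{-i}}(a_i)$. By transitivity of $\prec_{\hat{R}_{-i}}$, being undominated in $D_{\hat{R}_{-i}}(a_i)$ is the same as there being no $c_i\in A_i$ with $b_i^{*}\prec_{\hat{R}_{-i}}c_i$. I will derive a contradiction in two cases, according to whether $b_i^{*}$ survives in $\hat{R}_i$.

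In the first case, assume $b_i^{*}\in\hat{R}_i$. Define a reduction $S$ by $S_i=\hat{R}_i\setminus\{a_i\}$ and $S_j=\hat{R}_j$ for $j\ne i$ (note $a_i\ne b_i^{*}$ by asymmetry of $\prec_{\hat{R}_{-i}}$, so $S_i$ still contains $b_i^{*}$ and hence $S$ is a bona fide product reduction). Clearly $S\subsetneq\hat{R}$, and the only element of any $\hat{R}_j\setminus S_j$ is $a_i$, which is dominated relative to $\hat{R}_{-i}$ by $b_i^{*}\in\hat{R}_i$. Hence $\hat{R}\xlongrightarrow{n}S$ with $S\ne\hat{R}$, contradicting the maximality of $\hat{R}$.

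In the second case, assume $b_i^{*}\notin\hat{R}_i$. Since $\hat{R}$ is a maximal nested reduction, it is the final reduction of some nested elimination sequence $\langle R^{\alpha}:\alpha\le\hat{\alpha}\rangle$ with $R^0=A$, so $b_i^{*}\in R^0_i$ but $b_i^{*}\notin R^{\hat{\alpha}}_i=\hat{R}_i$. Let $\beta$ be the least ordinal with $b_i^{*}\notin R^{\beta}_i$. This $\beta$ cannot be a limit, because at limit steps $R^{\lambda}_i=\bigcap_{\alpha<\lambda}R^{\alpha}_i$, so if $b_i^{*}$ lay in every earlier $R^{\alpha}_i$ it would lie in $R^{\lambda}_i$ as well. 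Thus $\beta=\gamma+1$ for some $\gamma$, with $b_i^{*}\in R^{\gamma}_i\setminus R^{\gamma+1}_i$. Applying the nested reduction condition \eqref{eq:Nested-Red} to the pair $R^{\gamma}\xlongrightarrow{n}R^{\gamma+1}$, there exists $c_i\in R^{\gamma}_i\subseteq A_i$ with $b_i^{*}\prec_{R^{\gamma}_{-i}}c_i$.

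Finally, I would use monotonicity of the sequence: since $\hat{R}_{-i}\subseteq R^{\gamma}_{-i}$ (and $\hat{R}_{-i}$ is non-empty, being the coordinate of a reduction that hosts the strategy $a_i$ with a non-empty dominating set), the inequalities $u_i(b_i^{*},a_{-i})<u_i(c_i,a_{-i})$ valid for every $a_{-i}\in R^{\gamma}_{-i}$ remain valid for every $a_{-i}\in\hat{R}_{-i}$. Hence $b_i^{*}\prec_{\hat{R}_{-i}}c_i$ with $c_i\in A_i$, contradicting the undominatedness of $b_i^{*}$. The only delicate point in the argument is making sure that $b_i^{*}$ is eliminated at a successor step rather than ``in the limit''; this is handled by the minimality of $\beta$ together with the intersection rule at limit ordinals.
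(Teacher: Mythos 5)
Your proof is correct and follows essentially the same strategy as the paper's: assume an undominated $b_i^*\in D_{\hat R_{-i}}(a_i)$ exists, note that maximality of $\hat R$ forces $b_i^*\notin\hat R_i$, trace its elimination to a stage of the sequence where some $c_i$ dominates it, and transport that domination down to $\hat R_{-i}$ by monotonicity to contradict undominatedness. Your version merely spells out two points the paper leaves implicit (the explicit one-step reduction ruling out $b_i^*\in\hat R_i$, and the successor-versus-limit argument locating the elimination stage), which is fine but not a different route.
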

\begin{proof}
Take any nested elimination sequence having $\hat{R}$ as its maximal reduction. Suppose by way of contradiction that $D_{\hat{R}_{-i}}(a_i) \neq \emptyset$ and $b_i \in D_{\hat{R}_{-i}}(a_i)$ is undominated. Since $\hat{R}$ is maximal, the strategy $b_i$ must belong to $A_i \setminus \hat{R}_i$, meaning that $b_i$ must have been eliminated at some stage of the sequence at hand. Hence, there is a strategy $c_i \in A_i$ and a reduction $R$ in the range of our sequence such that $c_i \in R_i$ and $b_i \prec_{R_{-i}} c_i$. Since $\hat{R} \subseteq R$, we have $b_i \prec_{\hat{R}_{-i}} c_i$ and, by transitivity, $a_i \prec_{\hat{R}_{-i}} c_i$. This contradicts the assumption that $b_i$ is an undominated element of $D_{\hat{R}_{-i}}(a_i)$.
\end{proof}

Our first result is a full characterization of when the two sets of maximal reductions $\mathbf{\hat{R}}^n$ and $\mathbf{\hat{R}}^u$ coincide.

\begin{theorem}\label{thm:TFAE_outcomes}
The following statements are equivalent:
\begin{enumerate}
	\item[1)] $\mathbf{\hat{R}}^n = \mathbf{\hat{R}}^u$;
	\item[2)] Every maximal nested reduction $\hat{R}\in \mathbf{\hat{R}}^n$ is completely bounded;
	\item[3)] $\mathcal{E}^{n} \subseteq \mathcal{E}^{u}$.
\end{enumerate}
\end{theorem}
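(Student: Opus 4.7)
I would prove the three conditions equivalent by establishing the cycle $(2)\Rightarrow(3)\Rightarrow(1)\Rightarrow(2)$, which keeps the key technical step (using Lemma~\ref{lemma:MaxNestUnbound}) isolated in a single implication.

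For $(2)\Rightarrow(3)$, I would take an arbitrary nested sequence $\sigma = \langle R^{\alpha}: \alpha<\delta\rangle \in \mathcal{E}^{n}$ and verify that it also satisfies the definition of a universal sequence. The first three clauses are immediate: $R^{0}=A$ is common to both procedures, the relation $\xlongrightarrow{n}$ is stronger than $\xlongrightarrow{u}$ so every successor step carries over, and the limit condition is identical. The only substantive point is the terminal condition at the maximal reduction $\hat{R}$. Here I would argue that for every $i \in I$ and $a_i \in \hat R_i$, the set $D_{\hat R_{-i}}(a_i)$ is empty. Indeed, by assumption $(2)$ the reduction $\hat R$ is completely bounded, so if $D_{\hat R_{-i}}(a_i)$ were non-empty, it would contain an undominated element; but Lemma~\ref{lemma:MaxNestUnbound} states that a non-empty $D_{\hat R_{-i}}(a_i)$ at a maximal nested reduction contains no undominated element, a contradiction. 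Hence $D_{\hat R_{-i}}(a_i)=\emptyset$ for all $a_i \in \hat R_i$, which exactly means that $\hat R \xlongrightarrow{u} S$ forces $S=\hat R$, so $\sigma \in \mathcal{E}^{u}$.

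For $(3)\Rightarrow(1)$, I would use the fact (stated in the excerpt) that $\mathcal{E}^{n}$ is non-empty and that $\mathbf{\hat R}^{u}$ is a singleton $\{\hat R^{u}\}$ by Chen--Long--Luo's order-independence theorem. Any $\hat R \in \mathbf{\hat R}^{n}$ is the maximal reduction of some $\sigma \in \mathcal{E}^{n}$; by $(3)$, $\sigma \in \mathcal{E}^{u}$, and since the maximal reduction of a sequence is an intrinsic property of the sequence, $\hat R \in \mathbf{\hat R}^{u}$. Combining with the inclusion $\hat R^{u} \subseteq \hat R^{n}$ noted just before Section~\ref{sec:Res} (or simply the uniqueness of $\hat R^{u}$ and non-emptiness of $\mathbf{\hat R}^{n}$) gives $\mathbf{\hat R}^{n} = \mathbf{\hat R}^{u}$.

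For $(1)\Rightarrow(2)$, the argument is essentially the observation already noted in the text that the maximal universal reduction is vacuously completely bounded. Under $(1)$, any $\hat R \in \mathbf{\hat R}^{n}$ equals $\hat R^{u}$, and the terminal condition for the universal procedure forces $D_{\hat R_{-i}}(a_i)=\emptyset$ for all $i$ and all $a_i\in \hat R_i$; thus the defining implication of complete boundedness is vacuous and $(2)$ holds.

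The main obstacle is the implication $(2)\Rightarrow(3)$, where one must cleanly combine the necessary condition from Lemma~\ref{lemma:MaxNestUnbound} (maximal nested reductions admit no undominated dominators) with the sufficient condition provided by complete boundedness (non-empty dominating sets admit undominated elements) to conclude emptiness of every dominating set at $\hat R$. Once this is carried out, the remaining implications follow from the definitions and from the order independence of $\mathcal{E}^{u}$.
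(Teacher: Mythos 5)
Your proposal is correct and follows essentially the same route as the paper: the same three-implication cycle (merely listed starting from condition 2), with the same use of order independence of $\mathcal{E}^{u}$ for $(3)\Rightarrow(1)$, the same vacuity/terminal-condition argument for $(1)\Rightarrow(2)$, and the same combination of complete boundedness with Lemma \ref{lemma:MaxNestUnbound} to show the dominating sets at $\hat{R}$ are empty in $(2)\Rightarrow(3)$.
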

\begin{proof}
\begin{itemize}
\item $\bm{1) \implies 2).}$ Suppose $\mathbf{\hat{R}}^n = \mathbf{\hat{R}}^u$. By order independence of the class of universal elimination sequences, there is a unique maximal reduction $\hat{R}$, which is both universal and nested. By way of contradiction, suppose $\hat{R}$ is not completely bounded. This implies that for some player $i\in I$ and strategy $a_i \in \hat{R}_i$, the dominating set $D_{\hat{R}_{-i}}(a_i)$ is non-empty. Therefore, we have $\hat{R} \xlongrightarrow{u} \left(\hat{R}_i \setminus\{a_i\}\right) \times \hat{R}_{-i}$, which contradicts the maximality of $\hat{R}$.
\item $\bm{2) \implies 3).}$ Suppose $2)$ holds. Take any nested elimination sequence in $\mathcal{E}^{n}$ and let $\hat{R}^n$ be its maximal reduction. Recall that every nested sequence is the initial segment of some universal sequence. Now take any reduction $S$ such that $\hat{R}^n \xlongrightarrow{u}S$. Since $\hat{R}^n$ is completely bounded by assumption, and by Lemma \ref{lemma:MaxNestUnbound}, we must have $\hat{R}^n = S$, which shows that our nested sequence is universal too.
\item $\bm{3) \implies 1).}$ If every nested elimination sequence is also universal, then every maximal nested reduction must be a maximal universal reduction too. Then the equality $\mathbf{\hat{R}}^n = \mathbf{\hat{R}}^u$ follows easily from the order independence of the class of universal elimination sequences.
\end{itemize}
\end{proof}

Theorem \ref{thm:TFAE_outcomes} says that the two sets of maximal reductions $\mathbf{\hat{R}}^n$ and $\mathbf{\hat{R}}^u$ coincide if and only if complete boundedness holds ``in the limit,'' i.e., in every maximal nested reductions. Dually, the theorem says that $\mathbf{\hat{R}}^n \neq \mathbf{\hat{R}}^u$ if and only if there is a maximal nested reduction that is unbounded. The complete boundedness, or lack thereof, of non-maximal reductions is irrelevant. This point is illustrated by the game in Example \ref{ex:UnboundedAtLimit}, in which $\mathbf{\hat{R}}^n = \mathbf{\hat{R}}^u$ and yet some elimination sequences involve non-maximal reductions that are not completely bounded. Theorem \ref{thm:TFAE_outcomes} is slightly more general than Corollary 2 in \cite{chen2007iterated-strict-dominance}, which contains a sufficient condition for $\mathbf{\hat{R}}^n  =  \mathbf{\hat{R}}^u$ in terms of stable sets.

Since the class of universal elimination sequences is order independent, the equality $\mathbf{\hat{R}}^n = \mathbf{\hat{R}}^u$ holds true only if the class of nested sequences $\mathcal{E}^{n}$ is order independent. But we show in Example \ref{ex:OInd-no-equal} that the order independence of $\mathcal{E}^{n}$ is not sufficient for having $\mathbf{\hat{R}}^n = \mathbf{\hat{R}}^u$.

The next theorem characterizes the stronger form of equivalence between nested and universal elimination procedures.

\begin{theorem}\label{thm:TFAE_classes}
The following statements are equivalent:
\begin{enumerate}
\item[1)] The class of nested elimination sequences $\mathcal{E}^n$ is completely bounded;
\item[2)] The class of universal elimination sequences $\mathcal{E}^u$ is completely bounded;
\item[3)] $\mathcal{E}^n = \mathcal{E}^u$.
\end{enumerate}
\end{theorem}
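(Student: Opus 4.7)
The plan is to prove the cycle $2) \Rightarrow 1) \Rightarrow 3) \Rightarrow 2)$. The first step is immediate: since every nested elimination sequence is an initial segment of some universal sequence, the range of $\mathcal{E}^n$ is contained in the range of $\mathcal{E}^u$, so complete boundedness of the latter transfers to the former.

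For $1) \Rightarrow 3)$, the inclusion $\mathcal{E}^n \subseteq \mathcal{E}^u$ follows immediately from Theorem \ref{thm:TFAE_outcomes}: hypothesis $1)$ ensures that every maximal nested reduction is completely bounded, which is condition $2)$ of that theorem and hence yields its condition $3)$. For the reverse inclusion, the key tool is an auxiliary lemma: \emph{if $R$ is completely bounded and lies in the range of some nested elimination sequence, then every universal reduction of $R$ is a nested reduction of $R$.} Indeed, complete boundedness produces an undominated $c_i^* \in A_i$ dominating any removed $a_i$; if $c_i^* \notin R_i$, then $c_i^*$ was eliminated at some earlier reduction $R^\gamma$ of a nested sequence reaching $R$, i.e., $c_i^* \prec_{R^\gamma_{-i}} d_i$ for some $d_i \in R^\gamma_i$, and the inclusion $R_{-i} \subseteq R^\gamma_{-i}$ yields $c_i^* \prec_{R_{-i}} d_i$, contradicting the maximality of $c_i^*$. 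Armed with this lemma, I would run a transfinite induction on any $\langle R^\alpha \rangle \in \mathcal{E}^u$, showing simultaneously that each $R^\alpha$ lies in the range of some nested sequence and that every successor transition is nested. The maximality of the universal limit automatically implies nested-maximality because universal-maximality already forbids any dominator even in $A_i$.

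The direction $3) \Rightarrow 2)$ I would prove by contradiction. Suppose $R$ lies in the range of $\mathcal{E}^u = \mathcal{E}^n$ yet some $a_i \in R_i$ has nonempty $D_{R_{-i}}(a_i)$ with no undominated element. For every dominated $b \in R_i$, extending the single-strategy universal removal of $b$ to a full universal sequence and invoking $3)$ yields a dominator of $b$ in $R_i$; in particular the set $\mathcal{D} := D_{R_{-i}}(a_i) \cap R_i$ is nonempty, and by transitivity every element of $\mathcal{D}$ has a dominator in $\mathcal{D}$. The nested step $R \xlongrightarrow{n} R'$ with $R'_i = R_i \setminus \mathcal{D}$ is therefore legitimate, placing $R'$ in the range of $\mathcal{E}^n$. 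In $R'$ the strategy $a_i$ survives and remains dominated in $A_i$, but all of its dominators that ever lay in $R_i$---namely $\mathcal{D}$---have been excised, leaving $a_i$ without a dominator in $R'_i$. Consequently the universal step removing $a_i$ from $R'$ extends to a universal sequence that is not nested, contradicting $3)$.

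The main obstacle, I expect, will be the transfinite induction inside $1) \Rightarrow 3)$, specifically the limit step: one must certify that $R^\lambda = \bigcap_{\alpha < \lambda} R^\alpha$ is itself reachable by some nested sequence. This requires assembling the nested successor transitions produced by the inductive hypothesis into a coherent nested prefix up to $\lambda$ and then invoking an analog of the non-emptiness result cited in the paper for $\mathcal{E}^n$, so as to extend that prefix to a full nested elimination sequence ending at a maximal reduction.
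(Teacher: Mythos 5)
Your proposal is correct and follows essentially the same route as the paper: the same implication cycle, the same key observation for $1)\Rightarrow 3)$ (an undominated dominator supplied by complete boundedness cannot have been removed at an earlier stage of a nested sequence, so it survives in $R_i$ and turns every universal step into a nested one), and the same style of counterexample sequence for $3)\Rightarrow 2)$ (yours excises $D_{R_{-i}}(a_i)\cap R_i$ where the paper excises all of $D_{R_{-i}}(a_i)$, an immaterial difference). Your explicit attention to completing nested prefixes to full sequences in $\mathcal{E}^n$ at the limit step is a point the paper leaves implicit, not a divergence in method.
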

\begin{proof}
\begin{itemize}
\item $\bm{1) \implies 3).}$ Suppose $1)$ holds. It follows from Theorem \ref{thm:TFAE_outcomes} that $\mathcal{E}^{n} \subseteq \mathcal{E}^{u}$. In order to show the reverse inclusion, take any universal elimination sequence $\langle R^{\alpha}: \alpha < \delta \rangle$. We are going to show by induction that every initial segment of the latter sequence is also the initial segment of a nested elimination sequence, from which we can conclude that $\langle R^{\alpha}: \alpha < \delta \rangle$ belongs to $\mathcal{E}^{n}$. The claim is clearly true for the initial segment of length $1$, whose only element is $R^0 = A$. Now let $0<\delta' < \delta$ and suppose $\langle R^{\alpha}: \alpha < \delta' \rangle$ is the initial segment of a nested elimination sequence. We need to prove that also $\langle R^{\alpha}: \alpha \leq \delta' \rangle$ is the initial segment of a nested sequence. Two cases are possible. In the first, $\delta'$ is a limit ordinal. Then by definition $R^{\delta'} = \cap_{\alpha < \delta'}R^{\alpha}$. Since $\langle R^{\alpha}: \alpha < \delta' \rangle$ is part of a nested sequence by assumption, it follows immediately that $\langle R^{\alpha}: \alpha \leq \delta' \rangle$ is the initial segment of a nested sequence. In the second case, $\delta'$ is a successor ordinal. Then there is a unique $\gamma < \delta'$ such that $R^{\delta'} = R^{\gamma + 1}$. In addition, we have $R^{\gamma} \xlongrightarrow{u} R^{\gamma + 1}$ by assumption. If $R^{\gamma} = R^{\gamma + 1}$, then it is immediate that $R^{\gamma} \xlongrightarrow{n} R^{\gamma + 1}$. If $R^{\gamma} \neq R^{\gamma + 1}$, then there are a player $i\in I$ and strategies $a_i\in R_i^{\gamma}\setminus R_i^{\gamma + 1}$ and $b_i \in A_i$ such that $a_i \prec_{R_{-i}^{\gamma}} b_i$. By the inductive hypothesis, $R^{\gamma}$ is in the range of $\mathcal{E}^{n}$ and so it is completely bounded by assumption. Hence, $D_{R_{-i}^{\gamma}}(a_i)$ must contain an undominated element, which cannot be removed at any stage $\alpha
\leq \gamma$ and must be in $R_i^{\gamma}$. Therefore, also in this case we have $R^{\gamma} \xlongrightarrow{n} R^{\gamma + 1}$. This shows that every initial segment of $\langle R^{\alpha}: \alpha < \delta \rangle$ is the initial segment of a nested sequence. Hence, $\langle R^{\alpha}: \alpha < \delta \rangle$ is in $\mathcal{E}^n$.
\item $\bm{3) \implies 2).}$ Suppose by way of contradiction that $\mathcal{E}^{n} = \mathcal{E}^{u}$ and a reduction in the range of $\mathcal{E}^{u}$ is not completely bounded. That is, there are a reduction $R$, a player $i\in I$, and a strategy $a_i \in R_i$ such that $D_{R_{-i}}(a_i)$ is not empty and does not contain any undominated element. Then we have
\begin{equation*}
R \xlongrightarrow{u} \left(R_i \setminus D_{R_{-i}}(a_i)\right) \times R_{-i} \xlongrightarrow{u} \left(R_i \setminus \left(D_{R_{-i}}(a_i)\cup \{a_i\}\right)\right) \times R_{-i},
\end{equation*}
but $\left(R_i \setminus \left(D_{R_{-i}}(a_i)\cup \{a_i\}\right)\right) \times R_{-i}$ is not a nested reduction of $\left(R_i \setminus D_{R_{-i}}(a_i)\right) \times R_{-i}$. This contradicts the assumption $\mathcal{E}^{n} = \mathcal{E}^{u}$.
\item $\bm{2) \implies 1).}$ Since every nested elimination sequence is the initial segment of a universal sequence, the range of $\mathcal{E}^n$ is a subset of the range of $\mathcal{E}^u$, from which the claim follows.
\end{itemize}
\end{proof}

Theorem \ref{thm:TFAE_classes} says that all feasible elimination sequences are both nested and universal if and only if the range of each feasible sequence consists only of completely bounded reductions. Dually, the existence of an unbounded reduction in the range of some elimination sequence is both necessary and sufficient for having $\mathcal{E}^n \neq \mathcal{E}^u$.

An immediate consequence of Theorem \ref{thm:TFAE_classes} is the following.

\begin{corollary}\label{cor:AllBound}
If all reductions of a game $\Gamma$ are completely bounded, then $\mathcal{E}^n = \mathcal{E}^u$.
\end{corollary}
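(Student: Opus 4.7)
The plan is to invoke Theorem \ref{thm:TFAE_classes} directly. Complete boundedness of every reduction of $\Gamma$ is a global hypothesis on the game, and it trivially implies the range-based boundedness conditions appearing in Theorem \ref{thm:TFAE_classes}.

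More explicitly, I would proceed as follows. Assume that every reduction $R$ of $\Gamma$ is completely bounded in the sense of Definition \ref{Def:CBound}. The range of the class $\mathcal{E}^n$ (and also the range of $\mathcal{E}^u$) is, by construction, a subset of the collection of all reductions of $\Gamma$. Hence every reduction lying in the range of $\mathcal{E}^n$ is completely bounded, which is exactly statement 1) of Theorem \ref{thm:TFAE_classes}. Applying the equivalence 1) $\Leftrightarrow$ 3) of that theorem, we conclude $\mathcal{E}^n = \mathcal{E}^u$.

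There is essentially no obstacle: the corollary is a weakening of Theorem \ref{thm:TFAE_classes}, since it hypothesises complete boundedness for \emph{all} reductions rather than only those in the range of the relevant elimination class. The one small point worth mentioning in the write-up is that one does not need to verify boundedness on a sequence-by-sequence basis; a uniform global hypothesis is enough, and that is precisely why the statement is phrased as a corollary rather than a new theorem.
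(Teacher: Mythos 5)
Your proof is correct and follows exactly the route the paper intends: the paper states the corollary as an immediate consequence of Theorem \ref{thm:TFAE_classes}, and your observation that global complete boundedness implies complete boundedness of the range of $\mathcal{E}^n$, hence $\mathcal{E}^n = \mathcal{E}^u$ via the equivalence $1) \Leftrightarrow 3)$, is precisely that argument spelled out.
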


The complete boundedness of all reductions of a game is a sufficient, but not necessary, condition\footnote{Example \ref{ex:NotAllBounded} shows that it is not a necessary condition.} for the strong equivalence between nested and universal elimination procedures. There are games in which it is easier to check the complete boundedness of all reductions---including those not belonging to the range of any elimination sequence---than it is to check the complete boundedness of all feasible elimination sequences. Before listing some of these games in the next proposition, we need a few definitions.

We say that $\Gamma$ is \textbf{compact and own upper semicontinuous} if, for all players $i \in I$, the strategy set $A_i$ is a compact topological space and the payoff function $u_i$ is upper semicontinuous in $a_i$ for any fixed $a_{-i} \in A_{-i}$. A game $\Gamma$ is \textbf{quasisupermodular} \citep[p. 102]{kultti-salonen1997undominated} if, for every $i\in I$, the strategy set $A_i$ is a complete lattice and the utility function $u_i$ is quasisupermodular and order upper semicontinuous in $a_i$ for any fixed $a_{-i} \in A_{-i}$, and $u_i$ satisfies the single crossing property in $(a_i, a_{-i})$.\footnote{We refer the reader to \cite{kultti-salonen1997undominated} for the formal definitions of quasisupermodularity, order upper semicontinuity, and the single crossing property.} The class of quasisupermodular games contains the class of \textit{games with strategic complementarities} introduced by \cite{milgrom-shannon1994monotone}, which in turn contains the class of \textit{supermodular games} of \cite{milgrom-roberts1990rationalizability}.

\begin{proposition}\label{prop:AllBounded}
All reductions of a game $\Gamma$ are completely bounded if any of the following is true:
\begin{itemize}
\item[1)] All strategy sets in $\Gamma$ are finite;
\item[2)] $\Gamma$ is a compact and own upper semicontinuous game;
\item[3)] $\Gamma$ is a quasisupermodular game.
\end{itemize}
\end{proposition}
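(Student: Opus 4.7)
The statement reduces, by Definition~\ref{Def:CBound}, to showing that for every reduction $R$, player $i$, and $a_i \in R_i$ with $D_{R_{-i}}(a_i)\neq\emptyset$, the set $D_{R_{-i}}(a_i)$ contains a $\prec_{R_{-i}}$-maximal element $c^*_i$; by transitivity of $\prec_{R_{-i}}$, any such $c^*_i$ is automatically undominated in all of $A_i$ (any $d_i \in A_i$ with $c^*_i \prec_{R_{-i}} d_i$ would satisfy $a_i \prec_{R_{-i}} d_i$ and thus lie in $D_{R_{-i}}(a_i)$, contradicting maximality). I would prove each of the three cases by producing such a maximal element.

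Case~(1) is combinatorial: if $A_i$ is finite then so is $D_{R_{-i}}(a_i)$, and a finite non-empty set always admits a maximum for the strict partial order $\prec_{R_{-i}}$ (walk upward along dominators; the process terminates by finiteness).

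For case~(2), the natural tool is Zorn's lemma applied to the weak dominance preorder $\preceq_W$ on $A_i$, defined by $c\preceq_W c'$ if and only if $u_i(c,a_{-i})\leq u_i(c',a_{-i})$ for every $a_{-i}\in R_{-i}$. Fix $b_i\in D_{R_{-i}}(a_i)$ and set $F=\{c\in A_i : b_i\preceq_W c\}$. Own upper semicontinuity of $u_i$ renders each upper level set $\{c : u_i(c,a_{-i})\geq t\}$ closed, so both $F$ and every $\preceq_W$-upper contour set $\{c'\in A_i : c\preceq_W c'\}$ are closed subsets of the compact space $A_i$. For any $\preceq_W$-chain $C\subseteq F$, the family $\bigl\{\{c'\in F : c\preceq_W c'\} : c\in C\bigr\}$ enjoys the finite intersection property (any finite sub-chain has a $\preceq_W$-largest member that lies in each of its sets), so compactness supplies a common element $c^*\in F$ that is a $\preceq_W$-upper bound for $C$. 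The preorder version of Zorn's lemma then delivers a $\preceq_W$-maximal $c^*\in F$. Combining $a_i\prec_{R_{-i}} b_i$ with $b_i\preceq_W c^*$ gives $a_i\prec_{R_{-i}} c^*$, while any $d$ with $c^*\prec_{R_{-i}} d$ would be $\preceq_W$-strictly above $c^*$ and still inside $F$ (by transitivity of $\preceq_W$), contradicting the maximality of $c^*$.

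Case~(3) I would handle by invoking the theory of undominated strategies in quasisupermodular games developed by \cite{kultti-salonen1997undominated}. The intended strategy parallels case~(2) in spirit: topological compactness of $A_i$ is replaced by completeness of the lattice (which furnishes suprema of lattice chains), order upper semicontinuity of $u_i$ (which ensures weak dominance survives passage to suprema of lattice-monotone chains), and the interplay of quasisupermodularity with the single crossing property (which is what relates $\preceq_W$-chains to lattice-monotone ones). I expect this case to be the main obstacle: unlike in case~(2), the weak dominance preorder is not \emph{a priori} aligned with the lattice order on $A_i$, so extracting, from a given $\preceq_W$-chain, a lattice-monotone chain whose lattice supremum serves as an upper bound is precisely the step that calls upon the full strength of quasisupermodularity and single crossing, and is where I would follow the Kultti-Salonen arguments most closely.
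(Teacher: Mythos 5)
Your reduction of the claim to exhibiting a $\prec_{R_{-i}}$-maximal element of $D_{R_{-i}}(a_i)$, and your handling of case 1, are fine. For case 2 your argument is correct but takes a heavier route than the paper's: the paper also forms the set $Z=\bigcap_{a_{-i}\in R_{-i}}\{c_i: u_i(c_i,a_{-i})\geq u_i(b_i,a_{-i})\}$ (your $F$) and notes it is a closed, hence compact, subset of $A_i$, but then avoids Zorn's lemma entirely: it fixes a single arbitrary $a^*_{-i}\in R_{-i}$ and takes $\hat c_i$ to be a maximizer of the upper semicontinuous map $c_i\mapsto u_i(c_i,a^*_{-i})$ on the compact set $Z$. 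Any strict dominator of $\hat c_i$ relative to $R_{-i}$ would itself lie in $Z$ and strictly beat $\hat c_i$ at $a^*_{-i}$, a contradiction. This one-line extreme-value argument delivers the maximal element directly; your finite-intersection-property plus preorder-Zorn argument reaches the same conclusion and is valid, just less economical.

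Case 3, however, is a genuine gap: you describe an intended strategy and explicitly flag its key step---extracting, from a $\preceq_W$-chain, a lattice-monotone chain whose supremum is a $\preceq_W$-upper bound---as unresolved, and that step is indeed the problem, since elements of a $\preceq_W$-chain need not be comparable in the lattice order and there is no direct reason their join should $\preceq_W$-dominate them. The paper does not attempt anything of this kind. Instead it reduces case 3 to case 2 by changing topology: by Theorem 20 in \citet[p. 250]{birkhoff1967lattice3rd}, a complete lattice is compact in the order interval topology, and by Theorem A3 in \citet[p. 179]{milgrom-shannon1994monotone}, an order upper semicontinuous and quasisupermodular function is upper semicontinuous in that topology. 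Hence a quasisupermodular game is compact and own upper semicontinuous once $A_i$ carries the order interval topology, and part 2) applies verbatim. To complete your proof you should replace your sketched lattice-chain argument with this two-citation reduction (or supply the missing chain argument in full, which would essentially re-derive those two classical results).
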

\begin{proof}
See Appendix \ref{Proof:AllBounded}.
\end{proof}

In light of Corollary \ref{cor:AllBound}, in the games listed in Proposition \ref{prop:AllBounded} we can follow the nested elimination procedure without worrying. This also means that, instead of the commonly used universal procedure, we can use the nested procedure to find the Nash equilibria of a supermodular game, in which, as we know from \cite{milgrom-roberts1990rationalizability}, the join and meet of the set of serially undominated strategy profiles are the largest and smallest Nash equilibria of the game, respectively.

Corollary \ref{cor:AllBound} and part 2) of Proposition \ref{prop:AllBounded} extend Theorem 2 in \cite{chen2007iterated-strict-dominance} to the entire class of nested elimination sequences. Further, the equality $\mathcal{E}^n = \mathcal{E}^u$ in compact and own upper semicontinuous games is also proved by \cite{manili2026}, although his proof technique is different from ours.

\subsection{Examples}

\begin{example}\label{ex:UnboundedAtLimit}
Suppose the set of players is $I = \{1,2\}$, and strategy sets are $A_1 = [0,1]$ and $A_2 = \{\text{Left}, \text{Center}, \text{Right}\}$. The payoff function of player $1$ is
\begin{equation*}
u_1 (a_1, a_2) = \begin{dcases}
a_1 & \text{if} \; a_2 = \text{Left} \text{ or if } a_1 < 1 \text{ and } a_2 = \text{Right}\\
0 & \text{if} \; a_2 = \text{Center} \text{ or if } a_1 = 1 \text{ and } a_2 = \text{Right}.
\end{dcases}
\end{equation*}

The payoff function of player $2$ is
\begin{equation*}
u_2 (a_1, a_2) = \begin{dcases}
1 & \text{if} \; a_2 = \text{Left}\\
0 & \text{if} \; a_2 = \text{Center}\\
-1 & \text{if} \; a_2 = \text{Right}.
\end{dcases}
\end{equation*}

The unique maximal nested reduction of the game is $\hat{R}^n = \{1\}\times \left\lbrace \text{Left}\right\rbrace$, which is completely bounded. By Theorem \ref{thm:TFAE_outcomes}, $\{1\}\times \left\lbrace \text{Left}\right\rbrace$ is the maximal universal reduction too. Since there are elimination sequences that involve unbounded reductions, we have $\mathcal{E}^n \neq \mathcal{E}^u $ by Theorem \ref{thm:TFAE_classes}. To see this, consider the following universal elimination sequence:
\begin{equation}\label{eq:ExUnivNest}
A_1 \times A_2 \xlongrightarrow{u} A_1\times \left\lbrace \text{Left}, \text{Right} \right\rbrace \xlongrightarrow{u} \left([0, 0.5] \cup \{1\}\right)\times \left\lbrace \text{Left}, \text{Right} \right\rbrace  \xlongrightarrow{u} \{1\}\times \left\lbrace \text{Left}\right\rbrace. 
\end{equation}

The reduction $\{1\}\times \left\lbrace \text{Left} \right\rbrace$ is not a nested reduction of $\left([0, 0.5] \cup \{1\}\right)\times \left\lbrace \text{Left}, \text{Right} \right\rbrace$. Therefore, the sequence \eqref{eq:ExUnivNest} does not belong to $\mathcal{E}^n$.
\end{example}

\begin{example}[When order independence is not enough]\label{ex:OInd-no-equal}
Here we produce a game in which the class of nested elimination sequences is order independent yet $\mathbf{\hat{R}}^n \neq \mathbf{\hat{R}}^u$, from which it follows that $\mathcal{E}^n$ and $\mathcal{E}^u$ are disjoint. Suppose the set of players is $I = \{1,2\}$ and strategy sets are
\begin{align*}
	A_1 &= \left\lbrace \frac{2k}{2k+1} : k \geq 0 \right\rbrace \cup \{1\} =  \left\lbrace 0, \frac{2}{3}, \frac{4}{5},  \dots\right\rbrace \cup \{1\}\\
	A_2 &= \left\lbrace \frac{2k+1}{2k+2} : k \geq 0 \right\rbrace \cup \{1\} = \left\lbrace \frac{1}{2}, \frac{3}{4}, \frac{5}{6}, \dots\right\rbrace \cup \{1\}.
\end{align*}

The payoff function of each $i\in I$ is
\begin{equation*}
u_i (a_i, a_{-i}) = \begin{dcases}
0 & \text{if} \; a_i = a_{-i} = 1\\
\min\{a_i, a_{-i}\} & \text{otherwise}.
\end{dcases}
\end{equation*}

There is only one strictly dominated strategy across the two players, which is $0$. Once the latter is removed, the only dominated strategy is $\frac{1}{2}$. After removing $\frac{1}{2}$, the strategy $\frac{2}{3}$ becomes the only one to be dominated. Without $\frac{2}{3}$, the only dominated strategy is $\frac{3}{4}$, and so on. Formally, we can construct a sequence of reductions by letting $R^0 = A_1 \times A_2$ and, for all $\alpha$ such that $0 < \alpha < \omega$,
\begin{equation*}
	R^{\alpha} = R_1^{\alpha}\times R_2^{\alpha} = \left(A_1\setminus\left\lbrace \frac{2k}{2k + 1}: 0\leq k \leq \frac{\alpha-1}{2} \right\rbrace \right) \times \left(A_2\setminus\left\lbrace \frac{2k + 1}{2k + 2}: 0\leq k \leq \frac{\alpha-2}{2} \right\rbrace \right).
\end{equation*}

One can check that $R^{\alpha} \xlongrightarrow{n}R^{\alpha + 1}$ for all $\alpha < \omega$. At the first infinite ordinal, we have $R^{\omega} = \cap_{\alpha < \omega} R^{\alpha} = \{1\} \times \{1\}$, which is a maximal nested reduction. Hence, $\langle R^{\alpha}: \alpha \leq \omega \rangle$ is a feasible nested elimination sequence. Furthermore, $\{1\} \times \{1\}$ is the unique maximal nested reduction of the game, i.e., the class of nested elimination sequences is order independent. This follows from the fact that there is exactly one dominated strategy across players in any non-maximal nested reduction.

Finally, every non-maximal nested reduction $R^{\alpha}$ is completely bounded but $\hat{R}^n = \{1\} \times \{1\}$ is not. We know from Theorem \ref{thm:TFAE_outcomes} that $\hat{R}^n$ cannot be the maximal universal reduction. In fact, we have $\hat{R}^n \xlongrightarrow{u} \emptyset$ and, as a consequence, the maximal universal reduction is $\hat{R}^u = \emptyset$.
\end{example}

\begin{example}\label{ex:NotAllBounded}
We know from Corollary \ref{cor:AllBound} that the complete boundedness of all reductions of a game is sufficient for having $\mathcal{E}^n = \mathcal{E}^u$. But this example shows that it is not a necessary condition. Suppose the set of players is $I = \{1,2\}$, and strategy sets are $A_1 = [0,1]$ and $A_2 = \{\text{Left}, \text{Right}\}$. The payoff function of player $1$ is
	\begin{equation*}
		u_1 (a_1, a_2) = \begin{dcases}
			a_1 & \text{if} \; a_1 < 1 \text{ and } a_2 = \text{Left}\\
			0 & \text{otherwise}.
		\end{dcases}
	\end{equation*}
	
The payoff function of player $2$ is constant over $A = A_1 \times A_2$.

As there are no dominated strategies in this game, the classes $\mathcal{E}^n$ and $\mathcal{E}^u$ are completely bounded and the entire set of strategy profiles $A$ is the unique maximal reduction. Nevertheless, some reductions of this game are not completely bounded. A case in point is $[0,1] \times \{\text{Left}\}$. 
\end{example}

\section{An alternative formulation of nested reductions}\label{sec:GKZ}
Our definition of nested reductions follows \cite{dufwenberg2002existence-maximal-reduction} and several game theory textbooks. An alternative definition is provided in \cite{gilboa1990order}, which is one of the earliest articles on the order of eliminating dominated strategies.\footnote{More recently, \cite{hsieh-et-al_iterated-bounded-dominance2023} and \cite{qian-pathologies-IESD-2025} introduce another elimination procedure, called \textit{iterated elimination of boundedly strictly dominated strategies (IEBSDS)}, which delivers outcomes at least as coarse as those of the nested procedure. We refer the reader to their work for a detailed comparison between IEBSDS and the elimination procedures studied in this section.} In this section we compare the nested elimination sequences studied so far with those based on the alternative nested reductions of \cite{gilboa1990order}.

Given two reductions $R$ and $S$, we write $R \xlongrightarrow{\tilde{n}} S$ and say that $S$ is a \textbf{Gilboa-Kalai-Zemel (GKZ) reduction} of $R$ if the following holds: $S \subseteq R$ and, for all $i\in I$,
\begin{equation}\label{eq:AltNested-Red}
\text{if } a_i \in R_i\setminus S_i, \text{ then there is } b_i \in S_i \text{ such that } a_i \prec_{R_{-i}} b_i.
\end{equation}

The dominating strategy $b_i$ in \eqref{eq:AltNested-Red} must lie in $S_i \subseteq R_i$. Recall that, by contrast, the dominating strategy in a nested reduction is required to be in $R_i$. Clearly, every GKZ reduction is also nested but not vice versa. A GKZ elimination sequence is defined by letting $\ell = \tilde{n}$ in the definition of elimination sequences given in Subsection \ref{subsec:ElimSequen}. Note that every GKZ elimination sequence is the initial segment of a nested elimination sequence. The set of maximal GKZ reductions is $\mathbf{\hat{R}}^{\tilde{n}}$ and the class of all feasible GKZ sequences is $\mathcal{E}^{\tilde{n}}$.

As it turns out, we need the following, weaker form of boundedness to characterize the full equivalence between GKZ and nested elimination sequences.

\begin{definition}\label{Def:LBound}
A reduction $R$ is \textbf{locally bounded} if for all $i\in I$ and all $a_i \in R_i$ the following holds: if $D_{R_{-i}}(a_i) \cap R_i$ is non-empty, then there exists a strategy $c_i^* \in D_{R_{-i}}(a_i) \cap R_i$ that is not dominated relative to $R_{-i}$ by any $c_i \in R_i$. That is, if $a_i \prec_{R_{-i}} b_i$ for some $b_i \in R_i$, then there exists a strategy $c_i^* \in R_i$ such that $a_i \prec_{R_{-i}} c_i^*$ and, for all $c_i\in R_i$, it is not the case that $c_i^* \prec_{R_{-i}} c_i$.

A class of elimination sequences $\mathcal{E}^{\ell}$ is locally bounded if every reduction in the range of $\mathcal{E}^{\ell}$ is locally bounded.
\end{definition}

Notice that, in the definition above, the dominating strategies $b_i$ and $c_i^*$ must belong to $R_i \subseteq A_i$, and $c_i^*$ cannot be dominated by any strategy in $R_i$. Conversely, in a completely bounded reduction (Definition \ref{Def:CBound}), both $b_i$ and $c_i^*$ are required to be merely in $A_i$, and $c_i^*$ cannot be dominated by any strategy in $A_i$. Hence, any completely bounded reduction in the range of some elimination sequence\footnote{Outside the range of any feasible elimination sequence, complete boundedness does not necessarily imply local boundedness. For example, consider a one-player game in which the strategy space is $[0,1]$ and the payoff function is the identity map. Every reduction of this game is completely bounded, but the reduction $[0,1)$ is not locally bounded. Nevertheless, $[0,1)$ is not in the range of any feasible (nested or universal) elimination sequence.} is locally bounded but not vice versa. For instance, in Example \ref{ex:OInd-no-equal}, the maximal nested reduction $\hat{R}^n = \{1\} \times \{1\}$ is locally, but not completely, bounded.

As we discuss in more detail in Section \ref{subsec:RelConditions}, local boundedness generalizes the notion of \textit{games closed under dominance} introduced by \citet[p. 2016]{dufwenberg2002existence-maximal-reduction}.

\begin{theorem}\label{thm:TFAE_GKZ}
$\mathcal{E}^{\tilde{n}} \subseteq \mathcal{E}^{n}$ and $\mathbf{\hat{R}}^{\tilde{n}} = \mathbf{\hat{R}}^{n}$. Furthermore, the following statements are equivalent:
\begin{enumerate}
\item[1)] The class of GKZ elimination sequences $\mathcal{E}^{\tilde{n}}$ is locally bounded;
\item[2)] The class of nested elimination sequences $\mathcal{E}^n$ is locally bounded;
\item[3)] $\mathcal{E}^{\tilde{n}} = \mathcal{E}^n$.
\end{enumerate}
\end{theorem}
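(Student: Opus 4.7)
The plan is to prove the first part of the theorem --- the inclusion of classes and the equality of maximal-reduction sets --- and then the equivalences (1)--(3).

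For $\mathcal{E}^{\tilde{n}} \subseteq \mathcal{E}^n$, every GKZ step is a nested step (the required dominator $b_i \in S_i$ lies a fortiori in $R_i$), and maximality transfers: if $\hat{R}$ is maximal GKZ but admits a nested reduction $\hat{R} \xlongrightarrow{n} S \neq \hat{R}$, then some $a_i \in \hat{R}_i \setminus S_i$ has a dominator $b_i \in \hat{R}_i$, and since $b_i \neq a_i$ by asymmetry of $\prec$, the single-strategy removal $(\hat{R}_i \setminus \{a_i\}) \times \hat{R}_{-i}$ is a non-trivial GKZ reduction of $\hat{R}$, contradicting its maximality. This inclusion at once yields $\mathbf{\hat{R}}^{\tilde{n}} \subseteq \mathbf{\hat{R}}^n$. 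For the reverse $\mathbf{\hat{R}}^n \subseteq \mathbf{\hat{R}}^{\tilde{n}}$, I take $\hat{R} \in \mathbf{\hat{R}}^n$ with a witness nested sequence $\sigma$ and refine $\sigma$ into a GKZ sequence: each successor step $R^{\alpha} \xlongrightarrow{n} R^{\alpha+1}$ is unfolded into a transfinite GKZ sub-sequence from $R^{\alpha}$ to $R^{\alpha+1}$, while limit stages of $\sigma$ become limit stages of the refined sequence.

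For the equivalences, (3) $\Rightarrow$ (2) and (2) $\Rightarrow$ (1) are immediate from $\mathcal{E}^{\tilde{n}} \subseteq \mathcal{E}^n$, which forces the range of $\mathcal{E}^{\tilde{n}}$ to be contained in that of $\mathcal{E}^n$. The core is (1) $\Rightarrow$ (3). Assuming $\mathcal{E}^{\tilde{n}}$ is locally bounded and $\sigma = \langle R^{\alpha} \rangle \in \mathcal{E}^n$, I use transfinite induction on $\alpha$ to show that each initial segment of $\sigma$ is a GKZ initial segment. Limit stages are automatic since nested and GKZ sequences share the same limit condition. At a successor step $R^{\gamma} \xlongrightarrow{n} R^{\gamma+1}$, the inductive hypothesis together with a standard extension of any GKZ initial segment to a complete GKZ sequence places $R^{\gamma}$ in the range of $\mathcal{E}^{\tilde{n}}$, hence locally bounded. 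For each $a_i \in R^{\gamma}_i \setminus R^{\gamma+1}_i$ (which has a nested dominator in $R^{\gamma}_i$), local boundedness supplies an undominated $c^*_i \in R^{\gamma}_i$ with $a_i \prec_{R^{\gamma}_{-i}} c^*_i$; since $c^*_i$ has no dominator in $R^{\gamma}_i$, it cannot be removed in the nested step, so $c^*_i \in R^{\gamma+1}_i$, which verifies the GKZ condition. The maximality condition for $\sigma$ transfers exactly as in the first part.

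The hard part will be the transfinite refinement in $\mathbf{\hat{R}}^n \subseteq \mathbf{\hat{R}}^{\tilde{n}}$: a naive one-at-a-time removal risks prematurely eliminating a dominator needed for a later removal, stranding some surviving strategy with no live dominator and no available GKZ step. To avoid this I plan to use an ordinal-rank schedule along $\prec_{R^{\alpha}_{-i}}$-dominance chains relative to the target $R^{\alpha+1}_i$: at each sub-stage I remove only strategies admitting a dominator that survives into the next sub-step, and by transitivity of $\prec_{R^{\alpha}_{-i}}$ this amounts to eliminating strategies of lower rank first, so that each removed $a_i$ retains an as-yet-unremoved dominator of strictly higher rank or one in $R^{\alpha+1}_i$. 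The sub-sequence then traverses all ranks in order and converges to $R^{\alpha+1}$ at an appropriate limit ordinal, yielding the desired GKZ sequence that terminates at $\hat{R}$.
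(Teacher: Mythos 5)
Your architecture matches the paper's (class inclusion plus maximality transfer; a transfinite refinement of each nested step into GKZ sub-steps; the three-way equivalence closed through a cycle), and your first-part argument and the implications $2)\Rightarrow 1)$ and $1)\Rightarrow 3)$ are essentially the paper's. But two steps do not hold up as written. First, $3)\Rightarrow 2)$ is \emph{not} immediate from $\mathcal{E}^{\tilde{n}} \subseteq \mathcal{E}^{n}$: equality of the two classes (hence of their ranges) says nothing by itself about local boundedness, and the range of a GKZ class can in general fail to be locally bounded. The paper argues by contradiction: if $R$ in the range of $\mathcal{E}^{n}$ is not locally bounded at some $a_i$, then removing $\left(D_{R_{-i}}(a_i)\cap R_i\right)\cup\{a_i\}$ in one step is a legitimate nested reduction of $R$ (every removed strategy, including $a_i$, keeps a dominator in $R_i$) but not a GKZ reduction (after the removal $a_i$ has no surviving dominator), which produces a nested sequence outside $\mathcal{E}^{\tilde{n}}$. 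Without this or an equivalent construction your equivalence cycle is not closed.

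Second, the transfinite refinement behind $\mathbf{\hat{R}}^{n} \subseteq \mathbf{\hat{R}}^{\tilde{n}}$---which you rightly flag as the hard part---is left as a plan whose key object is not shown to exist. A rank schedule in which every removed strategy has a dominator of strictly higher rank, or one in the target $R^{\alpha+1}_i$, has no base case when the target supplies no dominators and the dominance order on the removed set is ill-founded: in the one-player game on $(0,1)$ with $R=(0,1)\xrightarrow{n}\emptyset$, no strategy is dominated by anything in $R^{\alpha+1}=\emptyset$, so the ranks cannot be grounded by the recursion you describe. The paper's Lemma \ref{lemma:GKZ-nested} uses a different device: first delete in one GKZ step all strategies that already have a dominator in the target, and then, at each subsequent sub-stage, delete the entire strict lower contour set $L_{R_{-i}}(a_i)\cap Z_i^{\alpha}$ of a single \emph{chosen surviving} strategy $a_i$; the chooser survives by irreflexivity and dominates everything it removes, and a separate induction shows that every non-empty residual set contains such an $a_i$ with non-empty lower contour, so the recursion always makes progress and terminates exactly at $S$. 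That construction, plus the bookkeeping that the union of the interpolated ranges is well-ordered by reverse inclusion and hence re-indexable by a single ordinal, is what your sketch still needs to supply.
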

\begin{proof}
See Appendix \ref{Proof:GKZ}.
\end{proof}

Note that both $\mathcal{E}^{\tilde{n}} \subseteq \mathcal{E}^{n}$ and $\mathbf{\hat{R}}^{\tilde{n}} = \mathbf{\hat{R}}^{n}$ hold true without assuming local boundedness. In contrast to complete boundedness, local boundedness does not entail order independence: a case in point is Example \ref{ex:Closure*} in the next section.

We make use of the following result in the proof of Theorem \ref{thm:TFAE_GKZ}.
\begin{lemma}\label{lemma:GKZ-nested}
If $R \xlongrightarrow{n} S$, then there is a sequence of reductions $\langle T^{\alpha}: \alpha \leq \delta \rangle$, with $0 < \delta$, such that:
\begin{itemize}
\item $T^{0} = R$
\item $T^{\alpha} \xlongrightarrow{\tilde{n}} T^{\alpha + 1}$ for all $\alpha <\delta$
\item $T^{\lambda} = \cap_{\alpha < \lambda} T^{\alpha}$ for all limit ordinals $\lambda < \delta$
\item $T^{\delta} = S$.
\end{itemize}
\end{lemma}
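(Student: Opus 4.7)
The plan is to build the sequence by transfinite recursion, maintaining the inductive invariant $T^{\alpha} \xlongrightarrow{n} S$ at every stage. Under this invariant, the axiom of choice furnishes, at each stage $\alpha$, a dominator-choice function $g^{\alpha}_i \colon T^{\alpha}_i \setminus S_i \to T^{\alpha}_i$ with $a_i \prec_{T^{\alpha}_{-i}} g^{\alpha}_i(a_i)$. I would set $T^0 = R$, take $T^{\lambda} = \bigcap_{\alpha < \lambda} T^{\alpha}$ at limit stages, and at successor stages define
\begin{equation*}
T^{\alpha+1}_i \;=\; S_i \,\cup\, g^{\alpha}_i\bigl(T^{\alpha}_i \setminus S_i\bigr).
\end{equation*}

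The successor-stage verifications are routine. First, $T^{\alpha} \xlongrightarrow{\tilde{n}} T^{\alpha+1}$ is a valid GKZ reduction because any $a_i \in T^{\alpha}_i \setminus T^{\alpha+1}_i$ lies in $T^{\alpha}_i \setminus S_i$ and has $g^{\alpha}_i(a_i) \in T^{\alpha+1}_i$ as a dominator. Second, the invariant $T^{\alpha+1} \xlongrightarrow{n} S$ is preserved because every $b_i \in T^{\alpha+1}_i \setminus S_i$ has the form $g^{\alpha}_i(a_i)$, hence itself lies in $T^{\alpha}_i \setminus S_i$, and so $b_i$'s own chosen dominator $g^{\alpha}_i(b_i) \in T^{\alpha+1}_i$ witnesses $b_i \prec_{T^{\alpha+1}_{-i}} g^{\alpha}_i(b_i)$ (dominance relative to $T^{\alpha+1}_{-i}$ is inherited from dominance relative to $T^{\alpha}_{-i}$, since $T^{\alpha+1}_{-i} \subseteq T^{\alpha}_{-i}$). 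Whenever $T^{\alpha} \supsetneq S$, one can further arrange $g^{\alpha}_i$ to be non-surjective onto $T^{\alpha}_i \setminus S_i$, ensuring that the sequence strictly decreases at that successor.

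I expect the main obstacle to be preservation of the invariant at limit stages, together with the proof that the construction actually reaches $S$. At a limit $\lambda$, the intersection $T^{\lambda}$ may contain a strategy $a_i$ whose successive dominators across the sequence have all been removed, in which case the invariant fails and no further progress is possible. A one-player game on $\mathbb{Z}$ with the identity payoff illustrates this pitfall: a downward choice of dominators leaves the sequence stuck at $\{\ldots,-2,-1\}$, whereas an upward choice reaches $\emptyset$ at the first limit. To secure the invariant globally, the choice functions $\{g^{\alpha}_i\}_{\alpha}$ must be coordinated, and I would proceed by deriving them from a single dominator map $f_i \colon R_i \setminus S_i \to R_i$ fixed once and for all (via the axiom of choice applied to the hypothesis $R \xlongrightarrow{n} S$), so that the dominators picked across successive stages agree on shared inputs and thereby survive passage to limits. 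Once invariant preservation is secured, a cardinality argument---strict descent at every successor with $T^{\alpha} \supsetneq S$, bounded by $(2^{|R|})^+$ iterations---forces $T^{\delta} = S$ at some ordinal $\delta$, completing the construction.
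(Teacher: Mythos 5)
You have isolated the right difficulty---the limit stages---and your observation that deriving every $g^{\alpha}_i$ from a single dominator map $f_i\colon R_i\setminus S_i \to R_i$ preserves the invariant at limits is correct: if $a_i$ survives to every $T^{\alpha}_i$ with $\alpha<\lambda$, then $f_i(a_i)\in f_i(T^{\alpha}_i\setminus S_i)\subseteq T^{\alpha+1}_i$ for every such $\alpha$, hence $f_i(a_i)\in T^{\lambda}_i$. But this fix reopens a gap at the successor stages that your sketch does not close. Once $f_i$ is frozen, you can no longer ``arrange $g^{\alpha}_i$ to be non-surjective,'' and the strict descent on which your cardinality argument rests can fail outright. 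Take your own example with strategy set $\mathbb{Z}$, identity payoff, and $S=\emptyset$, and the perfectly legitimate dominator map $f(n)=n+1$. Then $f(\mathbb{Z})=\mathbb{Z}$, so $T^{1}=S\cup f(T^{0}\setminus S)=\mathbb{Z}=T^{0}$ and the sequence is constant forever, stabilizing at $\mathbb{Z}\neq S$. So your construction reaches $S$ only for a well-chosen $f_i$, and you give no argument that such a choice always exists; the two requirements you need---coordination across stages (for the limits) and freshly non-surjective choices (for progress)---pull in opposite directions, and reconciling them is precisely the missing content of the proof.

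The paper resolves this tension by changing the successor step rather than the choice function: after removing in one GKZ step everything dominated by a member of $S_i$, each subsequent step deletes from the surviving non-$S$ strategies the entire strict lower contour set $L_{R_{-i}}(a_i)$ of one chosen survivor $a_i$ (who, by asymmetry, stays). A transfinite induction using only transitivity of $\prec_{R_{-i}}$ then establishes a single invariant: every surviving strategy outside $S_i$ always retains a dominator \emph{among the survivors}. That invariant does triple duty---it certifies each step as a GKZ reduction, it passes through limit stages, and it guarantees that a nonempty set of survivors always admits a further \emph{strict} removal, which is what forces termination at $S$. To salvage your route you would need to prove an analogous ``some dominator always survives and progress is always possible'' statement for a fixed $f_i$, and as written that step is asserted rather than proved.
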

\begin{proof}
See Appendix \ref{Proof:LemmaGKZ}.
\end{proof}

Lemma \ref{lemma:GKZ-nested} says that if $S$ is a nested reduction of $R$, then there is a non-increasing sequence of GKZ reductions that starts at $R$ and ends with $S$. As a result, given a nested elimination sequence, one can construct a GKZ elimination sequence having the same maximal reduction as the given sequence, so leading to the equality $\mathbf{\hat{R}}^{\tilde{n}} = \mathbf{\hat{R}}^{n}$ in Theorem \ref{thm:TFAE_GKZ}.

Lemma \ref{lemma:GKZ-nested} and the equality $\mathbf{\hat{R}}^{\tilde{n}} = \mathbf{\hat{R}}^{n}$ do not necessarily hold if one rules out GKZ sequences of transfinite length. To illustrate this point, consider again the one-player game having $(0,1)$ as its strategy set and the identity map as its payoff function. The empty set can be attained as a maximal nested reduction in just one step---that is, through the elimination sequence $(0,1) \xlongrightarrow{n} \emptyset$. But it is impossible to attain the empty set as a maximal GKZ reduction in finitely many steps. In fact, the fastest GKZ elimination sequences having the empty set as their maximal reduction are all indexed by the infinite ordinal $\omega + 1$. One such sequence is $\langle R^{\alpha}: \alpha < \omega +1\rangle$, where $R^{\alpha} = \left(\frac{\alpha}{\alpha + 1}, \frac{\alpha +1}{\alpha + 2}\right]$ for $ \alpha < \omega$ and $R^{\omega} = \cap_{\alpha <\omega} R^{\alpha}$.

Finally, recalling that a class of elimination sequences is completely bounded only if it is locally bounded, the following corollary is an immediate consequence of Theorems \ref{thm:TFAE_classes} and \ref{thm:TFAE_GKZ}.

\begin{corollary}
The following statements are equivalent:
\begin{enumerate}
\item[1)] The class of universal elimination sequences $\mathcal{E}^u$ is completely bounded;
\item[2)] The class of nested elimination sequences $\mathcal{E}^n$ is completely bounded;
\item[3)] The class of GKZ elimination sequences $\mathcal{E}^{\tilde{n}}$ is completely bounded;
\item[4)] $\mathcal{E}^u = \mathcal{E}^n = \mathcal{E}^{\tilde{n}}$.
\end{enumerate}
\end{corollary}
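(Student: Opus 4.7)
The plan is to chain Theorems \ref{thm:TFAE_classes} and \ref{thm:TFAE_GKZ} via two observations: (i) the range of $\mathcal{E}^{\tilde{n}}$ is contained in the range of $\mathcal{E}^n$, since every GKZ elimination sequence is an initial segment of some nested sequence; and (ii) any completely bounded reduction that lies in the range of a feasible elimination sequence is locally bounded, as noted right after Definition \ref{Def:LBound}. These two facts will let me transfer complete boundedness back and forth between the three classes and then invoke the two theorems to collapse everything to the equality in $4)$.

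The equivalence $1) \iff 2)$ is exactly the content of Theorem \ref{thm:TFAE_classes}, so no new work is needed there. For $2) \Rightarrow 3)$ I would apply (i): every reduction in the range of $\mathcal{E}^{\tilde{n}}$ is also in the range of $\mathcal{E}^n$, and complete boundedness is a range property, so it descends. For the converse $3) \Rightarrow 2)$, I would use (ii) to upgrade complete boundedness of $\mathcal{E}^{\tilde{n}}$ to local boundedness of $\mathcal{E}^{\tilde{n}}$, whence Theorem \ref{thm:TFAE_GKZ} delivers $\mathcal{E}^{\tilde{n}} = \mathcal{E}^n$; since the two classes then share the same range, $\mathcal{E}^n$ inherits complete boundedness from $\mathcal{E}^{\tilde{n}}$.

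To close the loop with $4)$, I would argue that $2)$, combined with Theorem \ref{thm:TFAE_classes}, yields $\mathcal{E}^n = \mathcal{E}^u$, and combined with the equality $\mathcal{E}^{\tilde{n}} = \mathcal{E}^n$ obtained in the previous step this gives $\mathcal{E}^u = \mathcal{E}^n = \mathcal{E}^{\tilde{n}}$. Conversely, if $4)$ holds then in particular $\mathcal{E}^n = \mathcal{E}^u$, and Theorem \ref{thm:TFAE_classes} guarantees that $\mathcal{E}^u$ is completely bounded, which is $1)$. There is no real obstacle: once observations (i) and (ii) are in place, the corollary is a bookkeeping exercise that assembles the conclusions of the two main theorems, and the only mildly delicate point is remembering that ``completely bounded'' is strictly stronger than ``locally bounded'' precisely within the range of feasible elimination sequences.
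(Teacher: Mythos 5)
Your proposal is correct and follows essentially the same route as the paper, which derives the corollary as an immediate consequence of Theorems \ref{thm:TFAE_classes} and \ref{thm:TFAE_GKZ} together with the observation (made right after Definition \ref{Def:LBound}) that a completely bounded reduction in the range of a feasible elimination sequence is locally bounded. Your two bookkeeping facts (i) and (ii) are exactly the glue the paper relies on, so there is nothing to add.
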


\section{Discussion}\label{sec:Discuss}

\subsection{Epistemic foundations}\label{subsec:Epistem}
Epistemic foundations for the universal elimination procedure are provided in \citet[Theorem 4]{chen2007iterated-strict-dominance}. They show that a strategy profile is compatible with common knowledge of rationality \textit{if and only if} it belongs to the maximal universal reduction. More specifically, they show the existence of a natural epistemic model for any game $\Gamma$, in which a strategy profile $a \in A$ is in $\hat{R}^u$ if and only if there is an epistemic state wherein $a$ is chosen and it is common knowledge that everyone is rational. A player is rational in a certain epistemic state if she does not play a strategy that is strictly dominated relative to the strategies she thinks the other players could play in that state. The notion of rationality adopted by \cite{chen2007iterated-strict-dominance} is based on \cite{aumann1995backward}, is formulated in terms of payoff dominance, and is a weaker version of the standard notion of rationality as subjective expected utility maximization. Using a similar notion of rationality, \citet[Proposition 1]{bach-cabessa2012common} carry out an epistemic analysis of the nested elimination sequence in which, at each stage, \textit{all} strictly dominated strategies are removed. In essence, they show that a strategy profile is compatible with common knowledge of rationality \textit{only if} it belongs to the maximal reduction of the nested sequence considered in their paper.

\subsection{Related conditions from the previous literature}\label{subsec:RelConditions}
Several conditions have been proposed that guarantee either the order independence of certain iterated elimination procedures or the equivalence between classes of elimination sequences. Here we discuss the conditions that are closest to our work. For ease of comparison, we reformulate these conditions using our notation and terminology.

\paragraph*{Property $\mathbf{C}$.} \cite{apt2007many-faces} introduces property $\mathbf{C}$ and then proves its sufficiency for the equivalence of the nested and the universal elimination sequences in which, at each stage, all strictly dominated strategies are removed.

\begin{definition}[\cite{apt2007many-faces}] The game $\Gamma$ satisfies property $\mathbf{C}$ if the following holds: for all $i\in I$ and all $a_i \in A_i$, if there are a strategy $b_i \in A_i$ and a reduction $R$ in the range of $\mathcal{E}^n$ such that $a_i \prec_{R_{-i}} b_i$, then there is a strategy $c^*_i \in A_i$ such that $a_i \prec_{R_{-i}} c^*_i$ and, for all $c_i \in A_i$, it is not the case that $c^*_i \prec_{R_{-i}} c_i$.
\end{definition}

It is easy to check that $\mathbf{C}$ implies the complete boundedness of $\mathcal{E}^n$, which in turn implies $\mathcal{E}^n = \mathcal{E}^u$ via Theorem \ref{thm:TFAE_classes}. Nevertheless, the following example shows that property $\mathbf{C}$ is not necessary for the equivalence of $\mathcal{E}^n$ and $\mathcal{E}^u$.

\begin{example}\label{ex:Apt}
The set of players is $I = \{1,2\}$ and strategy sets are
	\begin{align*}
		A_1 &= \{\text{Left}, \text{Right}\} \cup \left\lbrace \frac{2k}{2k+1} : k \geq 0 \right\rbrace =  \{\text{Left}, \text{Right}\} \cup \left\lbrace 0, \frac{2}{3}, \frac{4}{5},  \dots\right\rbrace \\
		A_2 &= \{\text{Left}, \text{Right}\} \cup \left\lbrace \frac{2k+1}{2k+2} : k \geq 0 \right\rbrace = \{\text{Left}, \text{Right}\} \cup \left\lbrace \frac{1}{2}, \frac{3}{4}, \frac{5}{6}, \dots\right\rbrace.
	\end{align*}
	The payoff function of each $i\in I$ is
	\begin{equation*}
		u_i (a_i, a_{-i}) = \begin{dcases}
			\min\{a_i, a_{-i}\} & \text{if} \; a_i, a_{-i} \in \mathbb{Q}\\
			a_i & \text{if} \; a_i \in \mathbb{Q} \text{ and } a_{-i} \in \{\text{Left}, \text{Right}\}\\
			1 & \text{if} \; a_i = a_{-i} = \text{Left} \text{ or } a_i = a_{-i} = \text{Right}\\
			0 & \text{otherwise}.
		\end{dcases}
	\end{equation*}
	
This is a game in which the two classes of elimination sequences $\mathcal{E}^n $ and $\mathcal{E}^u$ are both completely bounded and thus coincide. The unique maximal reduction is $\hat{R} = \{\text{Left}, \text{Right}\} \times \{\text{Left}, \text{Right}\}$, which is both nested and universal. But property $\mathbf{C}$ does not hold, and in particular it fails at $\hat{R}$ for any strategy that is a rational number.
\end{example}

\paragraph*{Forgetfulness-proof classes of elimination sequences.} \cite{hsieh-et-al_iterated-bounded-dominance2023} introduce a condition, called forgetfulness-proofness, which fully characterizes the equivalence between nested and GKZ elimination sequences.

\begin{definition}[\cite{hsieh-et-al_iterated-bounded-dominance2023}]\label{Def:Forget-proof}
The class of nested elimination sequences $\mathcal{E}^{n}$ is \textbf{forgetfulness-proof} if for every reduction $R$ in the range of $\mathcal{E}^{n}$ the following holds: for every $i\in I$, if $R_i\times R_{-i} \xlongrightarrow{n} S_i\times R_{-i}$ for some $S_i\subseteq R_i$ and $a_i\in S_i$ is dominated relative to $R_{-i}$ by some $b_i \in R_i$, then $a_i$ is dominated relative to $R_{-i}$ by some $c_i \in S_i$.
\end{definition}

One can show that $\mathcal{E}^{n}$ is forgetfulness-proof if and only if it is locally bounded. But the same does not hold for the class of GKZ elimination sequences: if one adapts Definition \ref{Def:Forget-proof} to $\mathcal{E}^{\tilde{n}}$ by substituting $\xlongrightarrow{\tilde{n}}$ for $\xlongrightarrow{n}$, then it is possible to show that $\mathcal{E}^{\tilde{n}}$ is always forgetfulness-proof but may fail to be locally bounded.

\paragraph*{Games closed under dominance*.} Local boundedness (Definition \ref{Def:LBound}) generalizes the notion of \textit{games closed under dominance} introduced by \citet[p. 2016]{dufwenberg2002existence-maximal-reduction}. While closure under dominance is defined for reductions attainable in finitely many elimination steps, local boundedness of a class $\mathcal{E}^{\ell}$ is defined over the full range of the class. A different generalization, called \textit{closure under dominance*}, is offered by \cite{luo2020iterated}. 

\begin{definition}[\cite{luo2020iterated}] The game $\Gamma$ is \textbf{closed under dominance*} if, for all reductions $R$ and $S$ for which there is a nested elimination sequence $\langle R^{\alpha}: \alpha < \delta \rangle$ wherein $R^{\beta} = R$ and $R^{\gamma} = S$ for some $\beta \leq \gamma < \delta$, the following is true: if $a_i \prec_{R_{-i}} b_i$ for some $i\in I$, $a_i \in S_i \subseteq R_i$ and $b_i\in R_i$, then there exists a strategy $c_i^* \in S_i$ such that $a_i \prec_{S_{-i}} c^*_i$ and, for all $c_i \in S_i$, it is not the case that $c^*_i \prec_{S_{-i}} c_i$.
\end{definition}

\cite{luo2020iterated} prove that closure under dominance* is sufficient for the order independence of the nested elimination procedure and for its equivalence to the elimination procedure of \cite{gilboa1990order}. Nevertheless, closure under dominance* is not sufficient for the equivalence of the nested and universal elimination procedures. To see this, notice that the game in Example \ref{ex:OInd-no-equal} is closed under dominance*, and yet its classes $\mathcal{E}^n$ and $\mathcal{E}^u$ are different in that they both fail to be completely bounded. On the other hand, closure under dominance* implies local boundedness of $\mathcal{E}^n$. The converse does not hold though, as the following example shows. 

\begin{example}\label{ex:Closure*}
The set of players is $I = \{1,2\}$ and strategy sets are
	\begin{align*}
		A_1 &= \{-1\} \cup \left\lbrace \frac{2k}{2k+1} : k \geq 0 \right\rbrace =  \left\lbrace -1, 0, \frac{2}{3}, \frac{4}{5},  \dots\right\rbrace \\
		A_2 &= \left\lbrace \frac{2k+1}{2k+2} : k \geq 0 \right\rbrace \cup \{ 1 \} = \left\lbrace \frac{1}{2}, \frac{3}{4}, \frac{5}{6}, \dots\right\rbrace \cup \{ 1 \}.
	\end{align*}
	The payoff function of player $1$ is
	\begin{equation*}
		u_1 (a_1, a_{2}) = \begin{dcases}
			- 1 & \text{if} \; a_1 =  -1 \\
			\min\{a_1, a_{2}\} & \text{otherwise}
		\end{dcases}
	\end{equation*}
while that of player $2$ is
\begin{equation*}
		u_2 (a_1, a_{2}) = \begin{dcases}
			a_2 & \text{if} \; a_1 =  -1 \\
			\min\{a_1, a_{2}\} & \text{otherwise}.
		\end{dcases}
	\end{equation*}

The class of nested elimination sequences of this games is locally bounded. Nevertheless, the game is not closed under dominance*. To see why, consider the maximal nested reduction $\hat{R}^n = \{-1\} \times \{1\}$. Since the strategy $a_1 = -1$ is strictly dominated at the very beginning of every elimination sequence, closure under dominance* says that Player $1$ must have a strategy in $\hat{R}^n = \{-1\} \times \{1\}$ that dominates $a_1 = -1$, which is clearly not the case. To conclude, notice that the empty reduction is another maximal nested reduction of this game, so proving that $\mathcal{E}^n$ is order dependent too.
\end{example}

\newpage
\appendix
\section{Appendix}

\subsection{Proof of Proposition \ref{prop:AllBounded}}\label{Proof:AllBounded}
\begin{itemize}
\item[1)] If all strategy sets in $\Gamma$ are finite, it is clear that every reduction of $\Gamma$ is completely bounded.
\item[2)] The proof of this part is adapted from the proof of the Lemma in \citet[p. 2012]{dufwenberg2002existence-maximal-reduction}. Suppose $\Gamma$ is a compact and own upper semicontinuous game. Let $R$ be a reduction of $\Gamma$ and suppose there are a player $i\in I$ and strategies $a_i \in R_i$ and $b_i \in A_i$ such that $a_i \prec_{R_{-i}} b_i$. We need to show that there is a strategy $c_i^* \in A_i$ such that $a_i \prec_{R_{-i}} c_i^*$ and, for all $c_i \in A_i$, it is not the case that $c_i^* \prec_{R_{-i}} c_i$.

Consider the following set
\begin{equation*}
Z:=	\bigcap_{a_{-i} \in R_{-i}} \left\lbrace c_i \in A_i: u_i(c_i, a_{-i}) \geq u_i(b_i, a_{-i}) \right\rbrace.
\end{equation*}
Clearly, $Z$ is non-empty as it contains $b_i$. For any $a_{-i} \in R_{-i}$, the set 
\begin{equation*}
\left\lbrace c_i \in A_i: u_i(c_i, a_{-i}) \geq u_i(b_i, a_{-i}) \right\rbrace
\end{equation*}
is closed due to the upper semicontinuity of $u_i$ in $c_i$. Hence, $Z$ is closed as it is the intersection of closed sets. Being a closed subset of the compact space $A_i$, the set $Z$ is compact as well.

Now fix an arbitrary $a^*_{-i} \in R_{-i}$ and let $u_{i\vert Z}: Z \to \mathbb{R}$ be the restriction of $u_i(c_i, a^*_{-i})$ to $Z$. Since $u_i(c_i, a^*_{-i})$ is upper semicontinuous in $c_i$, its restriction $u_{i\vert Z}$ is upper semicontinuous too. And since $Z$ is compact, $u_{i\vert Z}$ attains a maximum value at some $\hat{c}_i \in Z$. Since $\hat{c}_i \in Z$ and $a_i \prec_{R_{-i}} b_i$, we have $a_i \prec_{R_{-i}} \hat{c}_i$. Furthermore, since $\hat{c}_i$ maximizes $u_{i\vert Z}$, there is no $c_i \in A_i$ that strictly dominates $\hat{c}_i$ relative to $R_{-i}$. Therefore, we can choose $c_i^* = \hat{c}_i$ to prove the claim.

\item[3)] Suppose $\Gamma$ is a quasisupermodular game as per \citet[p. 102]{kultti-salonen1997undominated}. By Theorem 20 in \citet[p. 250]{birkhoff1967lattice3rd}, any complete lattice is compact in the order interval topology. By Theorem A3 in \citet[p. 179]{milgrom-shannon1994monotone}, an order upper semicontinuous and quasisupermodular function is upper semicontinuous in the order interval topology. Therefore, we can apply part 2) to conclude that every reduction of $\Gamma$ is completely bounded. \qed
\end{itemize}

\subsection{Proof of Theorem \ref{thm:TFAE_GKZ}}\label{Proof:GKZ}
\begin{itemize}
\item Here we show $\mathcal{E}^{\tilde{n}} \subseteq \mathcal{E}^{n}$. Take any GKZ elimination sequence and let $\hat{R}^{\tilde{n}}$ be its maximal reduction. Since any GKZ sequence is the initial segment of a nested sequence, all we have to prove is that $\hat{R}^{\tilde{n}}$ is a maximal nested reduction. By way of contradiction, suppose there is a reduction $S$ such that $S \neq \hat{R}^{\tilde{n}}$ and $\hat{R}^{\tilde{n}} \xlongrightarrow{n} S$. Then there are a player $i\in I$, a strategy $a_i \in \hat{R}^{\tilde{n}}_i \setminus S_i$ and a strategy $b_i \in \hat{R}^{\tilde{n}}_i$ such that $a_i \prec_{\hat{R}^{\tilde{n}}_{-i}} b_i$. But then we have $\hat{R}^{\tilde{n}} \xlongrightarrow{\tilde{n}} \left(\hat{R}^{\tilde{n}}_i \setminus \{a_i\}\right)\times \hat{R}^{\tilde{n}}_{-i}$, which contradicts the maximality of the GKZ reduction $\hat{R}^{\tilde{n}}$.

\item The inclusion $\mathbf{\hat{R}}^{\tilde{n}} \subseteq \mathbf{\hat{R}}^{n}$ follows from $\mathcal{E}^{\tilde{n}} \subseteq \mathcal{E}^{n}$. To show the reverse inclusion, take any nested elimination sequence $\langle R^{\alpha}: \alpha < \delta \rangle$. For every element $R^{\alpha}$ in this sequence, if it is not the case that $R^{\alpha} \xlongrightarrow{\tilde{n}} R^{\alpha + 1}$, then construct a sequence from $R^{\alpha}$ to $R^{\alpha + 1}$ as per Lemma \ref{lemma:GKZ-nested}. Denote the latter sequence by $\langle R^{\alpha}, R^{\alpha + 1}\rangle$. Now define the set $\mathbf{T}$ as the union of $\left\lbrace R^{\alpha}: \alpha < \delta \right\rbrace$ and the range of all the sequences $\langle R^{\alpha}, R^{\alpha + 1}\rangle$ formed previously. Now, since $\langle R^{\alpha}: \alpha < \delta \rangle$ and every sequence $\langle R^{\alpha}, R^{\alpha + 1}\rangle$ are non-increasing, the set $\mathbf{T}$ is well-ordered by reverse inclusion. As such, $\mathbf{T}$ is isomorphic to a unique ordinal $\delta'$ \citep[p. 36]{roman2008lattices}. The unique isomorphism from $\delta'$ to $\mathbf{T}$ is a well-defined GKZ sequence, and it has the same maximal reduction as $\langle R^{\alpha}: \alpha < \delta \rangle$. This proves that for any nested sequence, we can find a GKZ sequence with the same maximal reduction, whence $\mathbf{\hat{R}}^{n} \subseteq \mathbf{\hat{R}}^{\tilde{n}}$.

\item $\bm{1) \implies 3).}$  Having already established $\mathcal{E}^{\tilde{n}} \subseteq \mathcal{E}^{n}$, it is enough to show by induction that, if $\mathcal{E}^{\tilde{n}}$ is locally bounded, every initial segment of a nested elimination sequence is the initial segment of a GKZ sequence, from which it follows that every nested sequence is in $\mathcal{E}^{\tilde{n}}$. The proof can be adapted easily from the proof of Theorem \ref{thm:TFAE_classes}.
\item $\bm{3) \implies 2).}$ Suppose by way of contradiction that $\mathcal{E}^{\tilde{n}} = \mathcal{E}^{n}$ but a reduction $R$ in the range of $\mathcal{E}^{n}$ is not locally bounded. That is, there are a player $i \in I$ and a strategy $a_i \in R_i$ such that $D_{R_{-i}}(a_i) \cap R_i \neq \emptyset$ and every strategy in $D_{R_{-i}}(a_i) \cap R_i$ is dominated by another strategy in the same set. This would lead to a contradiction since
\begin{equation*}
R \xlongrightarrow{n} \left(R_i \setminus \left((D_{R_{-i}}(a_i) \cap R_i)\cup \{a_i\}\right)\right) \times R_{-i},
\end{equation*}
but $\left(R_i \setminus \left((D_{R_{-i}}(a_i) \cap R_i)\cup \{a_i\}\right)\right) \times R_{-i}$ is not a GKZ reduction of $R$.
\item $\bm{2) \implies 1).}$ Since every GKZ sequence is the initial segment of a nested sequence, the range of $\mathcal{E}^{\tilde{n}}$ is contained in the range of $\mathcal{E}^{n}$, from which the claim follows. \qed
\end{itemize}

\subsection{Proof of Lemma \ref{lemma:GKZ-nested}}\label{Proof:LemmaGKZ}
The proof is arranged in three parts.

\paragraph{Part 1)} Let $R \xlongrightarrow{n} S$. The statement is trivial if $R = \emptyset$ or $R=S$. Suppose $R \neq \emptyset$ and $R\neq S$. For any $i\in I$ and $a_i\in A_i$, define the \textbf{strict lower contour set} of $a_i$ relative to $R_{-i}$ as the set $L_{R_{-i}}(a_i) := \left\lbrace b_i \in A_i: b_i \prec_{R_{-i}} a_i\right\rbrace$. Construct a sequence of reductions $\langle T^{\alpha}\rangle$, where $\alpha$ ranges over all ordinals and $T^{\alpha}= \prod_{i\in I}T_i^{\alpha}$ for all $\alpha$, as follows: For all $i\in I$,
\begin{itemize}
\item $T_i^0 = S_i \cup Z_i^0 \cup Y_i$, where
\begin{align*}
Z_i^0 &= \left\lbrace a_i \in R_i\setminus S_i : \not\exists b_i \in S_i \text{ such that } a_i \prec_{R_{-i}}b_i \right\rbrace\\
Y_i &= \left\lbrace a_i \in R_i\setminus S_i : \exists b_i \in S_i \text{ such that } a_i \prec_{R_{-i}}b_i \right\rbrace
\end{align*}
\item $T_i^{\alpha + 1} = S_i \cup Z_i^{\alpha + 1}$ for all ordinals $\alpha$, where
\begin{equation*}
		Z_i^{\alpha + 1} = \begin{dcases}
			Z_i^{\alpha} \setminus \left(L_{R_{-i}}(a_i) \cap Z_i^{\alpha}\right) \text{ for some } a_i \in Z_i^{\alpha} \text{ s.t. } L_{R_{-i}}(a_i) \cap Z_i^{\alpha} \neq \emptyset & \text{if} \; Z_i^{\alpha} \neq \emptyset\\
			Z_i^{\alpha} & \text{otherwise}
		\end{dcases}
	\end{equation*}
\item $T_i^{\lambda} = \cap_{\alpha < \lambda} T_i^{\alpha}$ for all limit ordinals $\lambda$.
\end{itemize}

It is easy to check that $S_i$, $Z_i^0$ and $Y_i$ are pairwise disjoint and $T_i^0 = R_i$. Furthermore, the sequence is non-increasing, i.e., $T^{\beta} \subseteq T^{\alpha}$ whenever $\alpha < \beta$.

\paragraph{Part 2)} Here we show by induction that, for all ordinals $\alpha$, if $a_i \in Z_i^{\alpha}$, then there exists a strategy $b_i \in Z_i^{\alpha}$ such that $a_i \in L_{R_{-i}}(b_i)$. Suppose $\alpha = 0$ and $a_i \in Z_i^{0}$. Then it follows from the definition of $Z_i^0$ and $Y_i$ that there is $b_i \in Z_i^{0}$ such that $a_i \in L_{R_{-i}}(b_i)$. For the inductive step, suppose for all $\beta < \alpha$, if $a_i \in Z_i^{\beta}$, then there exists $b_i \in Z_i^{\beta}$ such that $a_i \in L_{R_{-i}}(b_i)$. Let $a_i \in Z_i^{\alpha}$ and take any $\beta' < \alpha$. Since $Z_i^{\alpha} \subseteq Z_i^{\beta'}$, by the inductive hypothesis there is a strategy $b_i \in Z_i^{\beta'}$ such that $a_i \in L_{R_{-i}}(b_i)$. Now, if $b_i \in Z_i^{\alpha}$, then the claim clearly holds. If $b_i \notin Z_i^{\alpha}$, then $b_i \in Z_i^{\gamma}\setminus Z_i^{\gamma + 1}$ for some $\gamma$ such that $\beta'\leq \gamma < \alpha$. Thus we have
\begin{equation*}
Z_i^{\gamma}\setminus Z_i^{\gamma + 1} = Z_i^{\gamma} \cap L_{R_{-i}}(c_i) \text{ for some } c_i \in Z_i^{\gamma}.
\end{equation*}

Now, $a_i \in L_{R_{-i}}(c_i)$ by transitivity of the strict dominance relation, and $a_i \in Z_i^{\alpha} \subseteq Z_i^{\gamma}$ by assumption. At the same time, $a_i \notin Z_i^{\gamma + 1} \supseteq Z_i^{\alpha}$, so leading to a contradiction. Therefore, we must have $b_i \in Z_i^{\alpha}$.

\paragraph{Part 3)} The claim proved in part 2) entails that, for all $\alpha$, if $Z_i^{\alpha}$ is non-empty, so is $Z_i^{\alpha}\setminus Z_i^{\alpha + 1}$. Therefore, recalling that our sequence $\langle T^{\alpha}\rangle$ is non-increasing, there is a least ordinal $\delta$ at which $Z_i^{\delta} = \emptyset$ and $T_i^{\delta} = S_i$ for all $i\in I$.

It remains to show that $T^{\alpha} \xlongrightarrow{\tilde{n}} T^{\alpha + 1}$ for all $\alpha <\delta$. By construction, if $a_i \in Z_i^{\alpha}\setminus Z_i^{\alpha + 1}$, then there is a $b_i\in Z_i^{\alpha}$ such that $a_i \in L_{R_{-i}}(b_i)$. Hence, $b_i$ lies in $Z_i^{\alpha + 1}$ and strictly dominates $a_i$. This shows $T^{\alpha} \xlongrightarrow{\tilde{n}} T^{\alpha + 1}$ for all $\alpha <\delta$, so ending the proof. \qed 

\newpage

\bibliographystyle{plainnat}
\bibliography{biblio}

\end{document}